\title{Characterization of  degenerate   supersymmetric ground states
of the  Nicolai supersymmetric fermion lattice model by  symmetry breakdown} 
\date{August 21  2020}
\author{Hosho  Katsura,\thanks{Department of Physics, Graduate School of Science, The University of Tokyo, 7-3-1, Hongo,
Bunkyo-ku, Tokyo, 113-0033, Japan}$~^,$\thanks{Institute for Physics of Intelligence, The University of Tokyo, 7-3-1, Hongo, Bunkyo-ku, Tokyo,
113-0033, Japan}$~^,$\thanks{Trans-scale Quantum Science Institute, The University of Tokyo, 7-3-1, Hongo, Bunkyo-ku,
Tokyo, 113-0033, Japan}
~Hajime  Moriya,\thanks{Institute of Science and Engineering, Kanazawa University, Kakuma-machi, Kanazawa 920-1192, Japan. hmoriya4@se.kanazawa-u.ac.jp}
~Yu  Nakayama\thanks{Department of Physics, Rikkyo University, Toshima, Tokyo 171-8501, Japan}}
\newtheorem{thm}{Theorem}[section]
\newtheorem{lem}[thm]{Lemma}
\newtheorem{prop}[thm]{Proposition}
\theoremstyle{definition}
\newtheorem{defn}[thm]{Definition}
\theoremstyle{remark}
\newtheorem{rem}[thm]{{\it{Remark}}}
\numberwithin{equation}{section}
\newcommand{\Z}{{\mathbb{Z}}}%
\newcommand{\NN}{{\mathbb{N}}}%
\newcommand{\id}{{\mathbf{1} }}
\newcommand{\unit}{1}
\newcommand{\Al}{\mathcal{A}}%
\newcommand{\core}{\Al_{\circ}}%
\newcommand{\Ale}{\Al_{+} }%
\newcommand{\Alo}{\Al_{-}}%
\newcommand{\AlI}{{\Al}({\I})}%
\newcommand{\AlIe}{\AlI_{+}}%
\newcommand{\AlIo}{\AlI_{-}}%
\newcommand{\Ali}{{\Al}(\{i\})}%
\newcommand{\coree}{{\core}_+}%
\newcommand{\coreo}{{\core}_-}%
\newcommand{\cicr}{c_i^{\,\ast}}%
\newcommand{\ci}{c_i}%
\newcommand{\cjcr}{c_j^{\,\ast}}%
\newcommand{\cj}{c_j}%
\newcommand{\ctwokcr}{c_{2k}^{\, \ast}}%
\newcommand{\ctwok}{c_{2k}}%
\newcommand{\ctwokpcr}{c_{2k+1}^{\, \ast}}%
\newcommand{\ctwokp}{c_{2k+1}}%
\newcommand{\ctwolcr}{c_{2l}^{\, \ast}}%
\newcommand{\ctwol}{c_{2l}}%
\newcommand{\ctwolmcr}{c_{2l-1}^{\, \ast}}%
\newcommand{\ctwolm}{c_{2l-1}}%
\newcommand{\del}{\delta}
\newcommand{\delQ}{\del_Q}
\newcommand{\ket}[1]{|#1 \rangle}
\newcommand{\I}{{\mathrm{I}}}%
\newcommand{\zetai}{\zeta_{i}}
\newcommand{\opkappai}{\hat{\kappa}_{i}}
\newcommand{\opkappaiminus}{\hat{\kappa}_{i-1}}
\newcommand{\opkappaiplus}{\hat{\kappa}_{i+1}}
\newcommand{\Ikl}{{\rm{I}}_{k,l}}
\newcommand{\Izeroone}{{\rm{I}}_{0,1}}
\newcommand{\Izerotwo}{{\rm{I}}_{0,2}}
\newcommand{\Izerothree}{{\rm{I}}_{0,3}}
\newcommand{\Izeron}{{\rm{I}}_{0,n}}
\newcommand{\OPERA}{\hat{\mathcal{O}}}%
\newcommand{\Qseq}{\mathscr{Q}}%
\newcommand{\Xiconpl}{\hat{\Xi}_{\,+1\rm{const.}}}
\newcommand{\Xiconmi}{\hat{\Xi}_{\,-1\rm{const.}}}
\newcommand{\Xicon}{\hat{\Xi}_{\rm{const.}}}
\newcommand{\IDplkl}{r^{+}_{[2k,2l]}}
\newcommand{\IDmikl}{r^{-}_{[2k,2l]}}
\newcommand{\IDplzeronone}{r^{+}_{[0,2(n+1)]}}
\newcommand{\IDplzeroone}{r^{+}_{[0,2]}}
\newcommand{\IDmizeroone}{r^{-}_{[0,2]}}
\newcommand{\IDplnnone}{r^{+}_{[2n,2(n+1)]}}
\newcommand{\IDplzerotwo}{r^{+}_{[0,4]}}
\newcommand{\IDmizerotwo}{r^{-}_{[0,4]}}
\newcommand{\IDplzerothree}{r^{+}_{[0,6]}}
\newcommand{\IDmizerothree}{r^{-}_{[0,6]}}
\newcommand{\IDplonetwo}{r^{+}_{[2,4]}}
\newcommand{\IDmionetwo}{r^{-}_{[2,4]}}
\newcommand{\IDpltwothree}{r^{+}_{[4,6]}}
\newcommand{\IDmitwothree}{r^{-}_{[4,6]}}
\newcommand{\IDplonethree}{r^{+}_{[2,6]}}
\newcommand{\IDmionethree}{r^{-}_{[2,6]}}
\newcommand{\cstar}{{{C}}^{\ast}}%
\newcommand{\Ome}{\Omega}%
\begin{document}
\bibliographystyle{unsrtnat}
\maketitle

\noindent{\bf{keywords}}: 
{Supersymmetric  fermion lattice model.  Ground states. 
Symmetry breakdown.}
\begin{abstract}
We study  a  supersymmetric fermion lattice model   
 defined  by  Hermann Nicolai.
We show that  its   infinitely many  classical supersymmetric ground states 
 are  associated to   breakdown of    
 hidden  local supersymmetries.
 \end{abstract}

\section{Introduction}
\label{sec:INTRO}
 A supersymmetric  fermion lattice model  defined  by Nicolai \cite{NIC} 
 is  a   pioneering  work 
 on (non-relativistic) supersymmetric quantum mechanics.
  This model, which we  call  Nicolai model,  even predates Witten's 
 supersymmetric  quantum mechanical model   \cite{WITT81}, 
 see \cite{JUNK} \cite{JUNK40} for some historical remarks. 
 It has been  shown that the Nicolai model has  highly degenerate 
 supersymmetric ground states \cite{MORIYA-JS} \cite{RUBEN} which give rise 
 to interesting dynamical properties.
The aim of this paper is to discuss  the    degeneracy 
 of supersymmetric  ground states  of the Nicolai model
 from the viewpoint of  symmetry breakdown. 
 Based on  our previous findings \cite{MORIYA-JS}
 we will classify all classical  supersymmetric  ground states   
 in terms of  breakdown of  local fermionic symmetries (supersymmetries) 
 hidden in the  model.

\subsection{Supersymmetric fermion lattice model by Nicolai}
\label{subsec:MODEL}
We  introduce   a  spinless fermion  lattice model on  one-dimensional integer lattice $\Z$   given  by Nicolai \cite{NIC}. 
 For each  site $i\in \Z$
let $\ci$ and $\cicr$ denote the annihilation  
and the creation  of a spinless fermion at $i$. 
 They obey  the canonical anticommutation relations:
For  all $i,j\in\Z$
\begin{align}
\label{eq:CAR}
\{ \cicr, \cj \}&=\delta_{i,j}\, \unit, \nonumber \\
\{ \cicr, \cjcr \}&=\{ \ci, \cj \}=0.
\end{align}
For each site  $i\in \Z$  the fermion number operator is defined by
\begin{align}
\label{eq:ni}
n_i:=\cicr \ci.
\end{align}
A formal  infinite sum  $N:=\sum_{i\in\Z}n_i$ will denote 
 the total fermion number operator.
Let      
\begin{align}
\label{eq:Q-NIC}
Q=\sum_{i\in \Z} q_{2i},\ \ \text{where}\ q_{2i}
:= c_{2i+1} c^{\ast}_{2i} c_{2i-1}.
\end{align}
 We see  that
   $Q$ and its adjoint $Q^{\ast}$ are  fermion operators in the sense that 
\begin{equation}
\label{eq:Q-NICodd}
\{(-1)^{N},\ Q\}= \{(-1)^{N},\ {Q}^{\ast}\}=0.
\end{equation}
It is essential that   the  nilpotent property is satisfied:     
\begin{align}
\label{eq:Q-NICnil}
{Q}^{2}=0={{Q}^{\ast}}^{2}.
\end{align}
The supersymmetric Hamiltonian is given by
\begin{align}
\label{eq:HNIC}
H&:=\{ Q,\; {Q}^{\ast} \}.
\end{align}
The pair of  supercharges
$Q$, ${Q}^{\ast}$ and  
 the  supersymmetric  Hamiltonian 
 $H$  satisfy  the ${\cal{N}}=2$   supersymmetry relation 
   \cite{WEIN}, although there is no boson  in the model.

The  explicit  form of $H$ can be easily computed as 
\begin{align}
\label{eq:HNIC-gutai}
H
&=\sum_{i\in \Z}\bigl\{ 
c^{\ast}_{2i}c_{2i-1}c_{2i+2}c^{\ast}_{2i+3}
+c^{\ast}_{2i-1}c_{2i}c_{2i+3}c^{\ast}_{2i+2} \nonumber\\
&\ +c^{\ast}_{2i}c_{2i}c_{2i+1}c^{\ast}_{2i+1}
+c^{\ast}_{2i-1}c_{2i-1}c_{2i}c^{\ast}_{2i}
- c^{\ast}_{2i-1} c_{2i-1}c_{2i+1}c^{\ast}_{2i+1} \bigr\}.
\end{align}
 The Nicolai model has  some   obvious   symmetries. 
The global $U(1)$-symmetry group   $\gamma_{\theta}$ 
 $(\theta \in [0,\; 2\pi])$ is defined  by    
\begin{align}
\label{eq:U1}
\gamma_{\theta} (c_{i})=e^{-i\theta}c_{i},\quad 
\gamma_{\theta} (c_{i}^{\ast})=e^{i\theta}c_{i}^{\ast},
\quad \forall i  \in \Z.
\end{align}
The  particle-hole transformation   is given by the  
 $\Z_{2}$ action: 
\begin{align}
\label{eq:particle-hole}
\rho (c_{i})=c_{i}^{\ast},\quad 
\rho (c_{i}^{\ast})=c_{i},
\quad \forall i  \in \Z.
\end{align}
 Let $\sigma$ denote the shift-translation automorphism 
 group  defined by 
\begin{align}
\label{eq:sigk}
\sigma_{k} (c_{i})=c_{i+k},\quad 
\sigma_{k} (c_{i}^{\ast})=c_{i+k}^{\ast}
\quad \forall i  \in \Z,\  \text{for each}\ k\in\Z.
\end{align}
The  Hamiltonian $H$
 \eqref{eq:HNIC-gutai}
  is invariant under 
$\gamma_\theta$  $(\theta \in [0,\; 2\pi])$,
 so it has   the global $U(1)$-symmetry.  
It has particle-hole symmetry as  $\rho(H) = H$, which follows from $\rho(Q)=-Q^\ast$ and $\rho(Q^\ast)=-Q$.
Finally, $H$ is invariant under translation by two sites, 
 as  $\sigma_{2k}(H)=H$ for any  $k\in \Z$, whereas 
  the  full  translation symmetry is explicitly  broken as    
 $\sigma_{2k+1}(H)\ne H$.
We will see 
 in $\S$\ref{sec:LOCAL} that  the Nicolai model has  other  
 local   symmetries.

\subsection{Mathematical preliminary}
In this subsection, we introduce some basic notations.
We refer to \cite{MORIYAahp} that gives 
 a general framework of  supersymmetric fermion lattice systems.
Although  it is not  absolutely necessary,   
 the  $\cstar$-algebraic  formulation    is 
 helpful to  formulate   our pertinent problem
 and gives a clue to solve it.

For each finite $\I\Subset\Z$, $\AlI$ denotes  
  the  finite-dimensional algebra generated by 
$\{\cicr, \, \ci\, ;\;i\in \I\}$, where the notation `$\I \Subset \Z$' means that  $\I\subset\Z$ and the number of sites $|\I|$ in $\I$ is finite.
The union of all these  $\AlI$  defines  the local algebra: 
\begin{equation} 
\label{eq:CARloc}
\core:=\bigcup_{\I \Subset \Z }\AlI.
\end{equation}
The norm completion of the local algebra $\core$ 
 gives a $\cstar$-system  $\Al$  called  the CAR algebra.    

Let $\Theta$ denote  the fermion grading automorphism on $\Al$ given as:
  \begin{equation}
\label{eq:CARTHETA}
\Theta(\ci)=-\ci, \quad \Theta(\cicr)=-\cicr,\quad \forall i\in \Z.
\end{equation}
 The  fermion system $\Al$ 
 is decomposed into the even part $\Ale$ and the odd part $\Alo$ as
\begin{align}
\label{eq:grad}
\Al&=\Ale\oplus \Alo,\quad 
\Ale= \{A\in \Al| \; \Theta(A)=A\},\quad
\Alo= \{A\in \Al|\;  \Theta(A)=-A\}.
\end{align}
Any element of  $\Ale$ is a linear sum of 
even  monomials of fermion field operators, while 
  that of  $\Alo$ is a linear sum of 
odd monomials of fermion field operators.
Similarly, for each  $\I\Subset\Z$, 
\begin{equation}
\label{eq:CARIeo}
\AlI=\AlIe\oplus\AlIo,\ \  
 \AlIe := \AlI\cap \Ale,\quad  \AlIo := \AlI\cap \Alo,
 \end{equation}
and  for the local algebra
\begin{equation}
\label{eq:core-grad}
\core=\coree\oplus\coreo,\ \ 
 \coree := \core \cap \Ale,\quad  \coreo := \core\cap \Alo.
 \end{equation}

Define  the  graded commutator 
$[\; \ , \; \ ]_{\Theta}$ 
  on $\Al$ by the  mixture of the commutator $[\ ,\ ]$
 and the anti-commutator $\{\ , \ \}$ as
\begin{align}
\label{eq:gcom}
[A_{+}, \;  B]_{\Theta}&= [A_{+}, \;  B]\equiv 
A_{+}B-BA_{+}
 {\text {\ \ for \ }}
A_{+} \in \Ale,    \ B \in \Al, \notag\\
[A, \;  B_{+}]_{\Theta} &= [A, \;  B_{+}]
=
AB_{+}-B_{+}A {\text {\ \ for \ }}
A \in \Al,   \ B_{+} \in \Ale, \notag\\
[A_{-}, \;  B_{-}]_{\Theta} &= \{A_{-},  \; B_{-}\}\equiv A_{-}B_{-}
+B_{-}A_{-}
  {\text {\ \ for \ }} A_{-}, B_{-}\in \Alo. 
\end{align}

Consider  the superderivation
 generated by the nilpotent supercharge $Q$:  
\begin{equation}
\label{eq:del-intro}
\delQ(A):=[Q,\; A]_{\Theta}
{\text {\ \ for   \ }} A \in \core. 
\end{equation}
We  see   that $\delQ$ 
   is a linear map that anticommutes with the grading:  
\begin{equation}
\label{eq:oddsuperder}
\delQ\cdot \Theta=-\Theta \cdot \delQ, 
\end{equation}
and that the {\it{graded Leibniz rule}} holds:
\begin{equation}
\label{eq:gleib}
\delQ(AB)=
\delQ(A)B+\Theta(A) \delQ(B)\ \  {\text {for }}\ A, B\in \Al.
\end{equation}

A state  (i.e. normalized positive linear functional on $\Al$) is called 
 a supersymmetric (ground) state
if and only if it is invariant under the superderivation $\delQ$, 
 equivalently  its state  vector (determined by  the GNS representation) 
is annihilated by both the supercharge $Q$
 and its adjoint $Q^\ast$. 
In this paper, we deal with  only pure states on finite systems 
 that are  always associated with 
   normalized vectors. 
 For a supersymmetric model, if there is a 
   supersymmetric  state, 
then the supersymmetry is unbroken. If there exists  no supersymmetric 
 state, then the supersymmetry  is spontaneously broken.  
 The Nicolai model has many supersymmetric  states as we will see later. 
 Hence  its supersymmetry  is unbroken.

\subsection{Classical supersymmetric ground  states of the Nicolai model}
\label{sec:CLASSICAL}

In this paper,  we focus on  
 {\it{classical}} supersymmetric ground states which will be stated below.
This subsection is indebted to \cite{MORIYA-JS}.

Let 
$\ket{1}_{i}$ and 
$\ket{0}_{i}$ denote the  occupied and empty  
 vectors  of the spinless fermion  at site $i$, respectively.
For each $i\in\Z$
\begin{equation}
\label{eq:CARsimple}
\ci \ket{1}_{i}= \ket{0}_{i},\ 
\cicr \ket{1}_{i}=0,\ 
\cicr \ket{0}_{i}= \ket{1}_{i},\ 
\ci \ket{0}_{i}=0.
\end{equation}
When there is no fear of confusion,  we will omit the  subscript 
and write  simply  
$\ket{1}$ and $\ket{0}$.

We identify    general (not necessarily supersymmetric) 
classical states on the fermion lattice system
  by  classical configurations on $\Z$.
\begin{defn}
\label{defn:CLASSIC-config}
Let $g(n)$ denote an arbitrary   
$\{0, 1\}$-valued  function  over  $\Z$. 
It is called  a classical configuration over $\Z$.
For any classical configuration $g(n)$ define 
\begin{align}
\label{eq:gn-vector}
\ket{g(n)_{n\in\Z}}:= \cdots \otimes 
\ket{g(i-1)}_{i-1} \otimes  \ket{g(i)}_{i}
\otimes \ket{g(i+1)}_{i+1} \otimes   \cdots.
\end{align}
This infinite product vector determines  
a  state $\psi_{g(n)}$ on the  fermion system  $\Al$ 
 which  will be  called  the  classical state associated 
to the configuration  $g(n)$ over $\Z$.
Let   $\iota_0(n):=0$  $\forall n\in \Z$.
Then
\begin{align}
\label{eq:Fock-via-config}
\Omega_{0}:=
\ket{\iota_0(n)_{n\in\Z}}=
 \cdots \otimes \ket{0}
\otimes \ket{0}
\otimes \ket{0} \otimes  \ket{0}
\otimes \ket{0} \otimes  \ket{0} \cdots.
\end{align}
 The above   $\Omega_{0}$ is called the Fock vector, 
 and its associated translation-invariant 
state $\psi_0$ on $\Al$ is called the Fock state.
Similarly let    $\iota_1(n):=1$  $\forall n\in \Z$.
Then
\begin{align}
\label{eq:OCCUP-via-config}
\Omega_{1}:=
\ket{\iota_1(n)_{n\in\Z}}=
 \cdots \otimes \ket{1}
\otimes \ket{1}
\otimes \ket{1} \otimes  \ket{1}
\otimes \ket{1} \otimes  \ket{1} \cdots.
\end{align}
 The above   $\Omega_{1}$ is called the 
fully-occupied vector, 
 and its associated translation-invariant 
state $\psi_1$ on $\Al$ is called the fully-occupied  state.
\end{defn}

 To each  classical configuration over $\Z$
 we   assign  an operator by the following rule. 
\begin{defn}
\label{defn:OPERA}
For each $i\in\Z$
  let $\opkappai$ denote the map 
from $\{0, 1\}$ into  $\Ali$ given as  
\begin{align}
\label{eq:OPERAi-config}
\opkappai(0):= \id, \quad \opkappai(1):= \cicr.
\end{align}
For each  classical configuration  $g(n)$ over $\Z$ 
 define   
the   infinite-product of fermion field operators: 
\begin{align}
\label{eq:defOPERAg}
\OPERA(g)&:=\prod_{i\in\Z} \opkappai\left(g(i)\right)=
\cdots \opkappaiminus\left(g(i-1)\right) \opkappai\left(g(i)\right) 
\opkappaiplus\left(g(i+1)\right)
\cdots, 
\end{align}
where the  multiplication  is  taken  in the  increasing  order of $i\in \Z$. 
If  $g(n)$ has  a  compact support, then   
\begin{align}
\label{eq:OPERAgcompact}
\OPERA(g)\in \core.
\end{align}
Otherwise  $\OPERA(g)$
 denotes  a  formal operator which does not belong to  $\Al$. 
\end{defn}

We have the following obvious correspondence between 
 product vectors given in 
Definition \ref{defn:CLASSIC-config} and product operators 
 given in  Definition \ref{defn:OPERA}  
via   the Fock representation.
\begin{prop}
\label{prop:Classic-vec-config}
Let  $\Omega_0$ denote the  Fock  vector 
 given in \eqref{eq:Fock-via-config}.
For any classical configuration  $g(n)$ over $\Z$, 
 the following  identity   holds{\rm{:}} 
\begin{align}
\label{eq:gonFock}
\OPERA(g)\Omega_0=\ket{g(n)_{n\in\Z}}.
\end{align}
\end{prop}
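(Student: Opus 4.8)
The plan is to reduce \eqref{eq:gonFock} to the case of a finitely supported configuration, which is where all the content lies, and then to compute directly in the Fock representation. If $g$ has finite support then $\OPERA(g)\in\core$ by \eqref{eq:OPERAgcompact}, and both sides of \eqref{eq:gonFock} are genuine vectors in the Fock space; this is the case I treat first. If $g$ has infinite support then $\OPERA(g)$ is only a formal operator and the product vector $\ket{g(n)_{n\in\Z}}$ lives in an incomplete tensor product sector inequivalent to the Fock sector, so \eqref{eq:gonFock} is to be understood slot-by-slot in that infinite tensor product; since the factors in \eqref{eq:defOPERAg} act on pairwise distinct sites in a fixed order, the infinite identity follows from the finite one applied to every finite truncation of $g$.

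So let $g$ have occupied sites $i_1<i_2<\cdots<i_k$. Because the product in \eqref{eq:defOPERAg} is taken in increasing order and the factors $\opkappai(0)=\id$ at empty sites drop out, one has $\OPERA(g)=c_{i_1}^{\ast}c_{i_2}^{\ast}\cdots c_{i_k}^{\ast}$. I would now apply this ordered monomial to $\Omega_0$, reading the creation operators from right to left, so that the sites are filled in the order $i_k,i_{k-1},\ldots,i_1$. Recall that in the Fock representation the action of a single creation operator $c_{j}^{\ast}$ on an occupation-number vector creates a particle at $j$ and multiplies by the sign $(-1)^{m_j}$, where $m_j$ is the number of occupied sites lying below $j$; this sign is exactly the bookkeeping of the CARs \eqref{eq:CAR}. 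At each of the $k$ steps the newly filled site is smaller than every site filled so far, so $m_j=0$ and the sign is $+1$. Hence $\OPERA(g)\Omega_0$ is the occupation-number vector with exactly the sites $i_1,\ldots,i_k$ filled, which is precisely $\ket{g(n)_{n\in\Z}}$; the empty-support base case $k=0$ is just $\id\,\Omega_0=\Omega_0=\ket{\iota_0(n)_{n\in\Z}}$ from \eqref{eq:Fock-via-config}.

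The main (indeed only) obstacle is the fermionic sign bookkeeping: one must check that the increasing-order convention built into $\OPERA(g)$ in \eqref{eq:defOPERAg} is compatible with the ordering convention implicit in the product vector \eqref{eq:gn-vector}, so that no residual sign survives. The computation above is arranged precisely to make this transparent: by filling sites from the largest index down to the smallest, each creation operator always acts at a site below all currently occupied ones, forcing every intermediate sign to be $+1$. Equivalently, one may phrase this as an induction on $|{\rm supp}\,g|$ in which one peels off the \emph{smallest} occupied site $i_1$, writes $\OPERA(g)=c_{i_1}^{\ast}\,\OPERA(g')$ with $g'$ the configuration obtained by emptying $i_1$, and uses that $c_{i_1}^{\ast}$ meets no occupied site below $i_1$; either way the signs collapse and \eqref{eq:gonFock} follows.
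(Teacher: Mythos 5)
Your proof is correct and takes the same route as the paper, which simply asserts that the identity follows directly from Definitions \ref{defn:CLASSIC-config} and \ref{defn:OPERA}; your write-up is the careful version of that one-line argument, making explicit the fermionic sign bookkeeping (filling sites from the largest index downward so every intermediate sign is $+1$) and the formal, slot-by-slot reading of \eqref{eq:gonFock} when $g$ has infinite support.
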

\begin{proof}
The desired identity   directly  follows 
from  Definition \ref{defn:CLASSIC-config}
 and Definition \ref{defn:OPERA}.
\end{proof}

It is easy to see that 
 the Fock state $\psi_0$ and the fully-occupied state $\psi_1$ are 
  supersymmetric  ground state for the Nicolai model. 
We would like to  give   all  classical 
supersymmetric ground states of the Nicolai model. 
For this purpose, 
 we introduce  the following  class of classical configurations.
 \begin{defn}
\label{defn:ground-config}
 Consider   three consecutive sites  $\{2i-1, 2i, 2i+1\}$ centered 
 at an even site $2i$ ($i\in\Z$).
   There are  $2^{3}$ configurations (i.e. eight $\{0,1\}$-valued functions)
on  $\{2i-1, 2i, 2i+1\}$. Let   
   $``0, 1, 0"$ and  $``1, 0, 1"$  be  called    forbidden triplets.
If  a classical configuration $g(n)$ $(n\in\Z)$ 
 does not include any of such  forbidden triplets   over $\Z$, 
then it is called a ground-state  configuration  over $\Z$ 
(for  the Nicolai model).
The set of all ground-state  configurations  over $\Z$
 is  denoted by  $\Upsilon$.
The set of all ground-state  configurations 
 whose supports are  included in some finite region 
 is denoted by  $\Upsilon_{\circ}$.
The set of all ground-state  configurations 
 whose supports are  included in 
a finite region $\I\Subset\Z$ is denoted by   $\Upsilon_{\I}$.
\end{defn}

  The following proposition    
 classifies  all the classical supersymmetric ground states in terms of 
classical configurations justifying 
   our   nomenclature ``ground-state configurations''  of  
   Definition \ref{defn:ground-config}.
It is based on the following  fact 
that can be easily checked by using  \eqref{eq:CARsimple}:
 The product vector 
$\ket{g(2i-1)}_{2i-1} \otimes  \ket{g(2i)}_{2i}
\otimes \ket{g(2i+1)}_{2i+1} 
$
is annihilated by both $q_{2i}$ and $q_{2i}^{\ast}$
 unless those are 
$
\ket{0}_{2i-1} \otimes  \ket{1}_{2i}
\otimes \ket{0}_{2i+1} 
$
or 
$
\ket{1}_{2i-1} \otimes  \ket{0}_{2i}
\otimes \ket{1}_{2i+1} 
$
which correspond to 
the forbidden triplets,  
 $\{g(2i-1)=0,\; g(2i)=1,\; g(2i+1)=0\}$ and   
  $\{g(2i-1)=1,\; g(2i)=0,\; g(2i+1)=1\}$, respectively.  
Theorem 2 \cite{MORIYA-JS} established that 
if there appears  no forbidden triplet in the sequence of  $g(n)$ $(n\in\Z)$  at all, then the corresponding product vector  
$\OPERA(g)\Omega_0$ is annihilated by both 
 $Q$ and  $Q^\ast$, hence it is a supersymmetric ground state, whereas  
if there is at least one   forbidden triplet in the sequence of  $g(n)$ $(n\in\Z)$, then either $Q$
 or  $Q^\ast$, or both do not annihilate 
$\OPERA(g)\Omega_0$ and so it is not   supersymmetric.
See \cite{MORIYA-JS} for the detail.
\begin{prop}
\label{prop:AllclassicZ}
A classical state  on the fermion lattice system  $\Al$
  is supersymmetric for the Nicolai model if and only if 
its associated configuration  
 $g(n)$ over $\Z$ is a ground-state 
 configuration for the Nicolai model as stated in  Definition
   \ref{defn:ground-config}, namely, if and only if 
   $g(n)\in \Upsilon$.
\end{prop}

\subsection{Supersymmetric ground states on  subsystems}
\label{subsec:SUSYonI}
We shall discuss   supersymmetric ground states on finite subsystems.
 First,  we  specify finite regions that we will consider.
Second, we  specify the  meaning of 
  ``supersymmetric ground states"  upon finite regions, as it  
is  not so obvious due to the boundary.

To deal with the Nicolai model which has period-2 translational symmetry 
 not    full  translation symmetry it is convenient to  
 consider  the special finite  intervals of $\Z$ 
whose  edges are both even, see
Proposition  \ref{prop:Iklenough} given later.
 Namely for  $k,l\in\Z$ ($k<l$) we  take 
\begin{align} 
\label{eq:Ikl}
\Ikl\equiv[2k,\; 2k+1,\;2(k+1),  \cdots\cdots,2(l-1),\; 2l-1, \;,2l].
\end{align}
We see that  $|\Ikl|$  is $2(l-k)+1$. 

Now we give the precise definition of supersymmetric ground states 
 on the finite interval $\Ikl$.
 \begin{defn}
\label{defn:SUSYstatesIkl}
 Consider  any finite interval 
 $\Ikl \equiv[2k,\; 2k+1,\;2(k+1),  \cdots,2(l-1),\; 2l-1, \;2l]$ $(k,l\in\Z$ 
   $k<l)$. 
Let      
\begin{align}
\label{eq:Q-NICklopen}
Q[k, l]\equiv \sum_{i=k}^{l} q_{2i}\in  {\Al}\left(\{2k-1\}\cup\Ikl\cup
\{2l+1\}\right)_{-},
\end{align}
where  $q_{2i}\equiv -c_{2i-1} c^{\ast}_{2i} c_{2i+1}$ as defined 
 in \eqref{eq:Q-NIC}. 
A state on $\Al(\Ikl)$ is called a 
{{\it{free-boundary supersymmetric ground state}}}
 if its arbitrary  state-extension to ${\Al}\left(\{2k-1\}\cup\Ikl\cup
\{2l+1\}\right)$ is 
invariant under the superderivation $\del_{Q[k, l]}$
 associated to the local supercharge $Q[k, l]$, equivalently its 
associated vector is 
 annihilated by both  $Q[k, l]$ and  $Q[k, l]^{\ast}$. 
\end{defn}

 First note that 
\begin{align}
\label{eq:Qidentical}
\del_{{Q[k, l]}}= \del_{Q}\  \text{on}\  \Al(\Ikl),
\end{align}
where $Q=\sum_{i\in \Z} q_{2i}$
 as  in \eqref{eq:Q-NIC}.
Namely,  upon  the  subsystem $\Al(\Ikl)$,  finite supercharge  ${Q[k, l]}$ 
 sitting on a slightly larger region $\{2k-1\}\cup\Ikl\cup
\{2l+1\}$  gives  the same  action as of  the total supercharge $Q$.
Second, we  address what ``its arbitrary  state-extension'' exactly means.
According to  \cite{AM2003}, 
for  every  classical state on the given local system, 
any  state-extension of it to  a larger system  
 is  also a   classical state (or  mixture of such).
In the present case  it is described as follows. 
By Proposition \ref{prop:Classic-vec-config}
 any classical state of 
$\Al(\Ikl)$ is determined by 
a  $\{0, 1\}$-valued  function $g(n)$  on  $\Ikl$. 
Any  state-extension of it    
to ${\Al}\left(\{2k-1\}\cup\Ikl\cup
\{2l+1\}\right)$ is determined by 
$\tilde{g}(n)$  on $\{2k-1\}\cup\Ikl\cup\{2l+1\}$
 satisfying that 
\begin{align}
\label{eq:extension}
\tilde{g}(n)=g(n)\ \text{for}\ \forall n \in \Ikl,\ \ 
\tilde{g}(2k-1)=0\ \text{or}\ 1,\   
\tilde{g}(2l+1)=0\ \text{or}\ 1.
\end{align}  
Due to the choice of the marginal points
 $\{2k-1, 2l+1\}$ there are  four possibilities.

To find  configurations associated to  Definition
   \ref{defn:SUSYstatesIkl}, 
 we  introduce  a subclass  of  
$\Upsilon_{\circ}$ given in  Definition \ref{defn:ground-config}
 requiring  certain   boundary  conditions as follows.
\begin{defn}
\label{defn:SUSYboundaryconfig}
Let  $\Ikl \equiv[2k,\; 2k+1,\;2(k+1),  \cdots,2(l-1),\; 2l-1, \;2l]$  
($k,l\in\Z$ s.t. $k<l$) as before. 
The set of all 
  $g(n)\in \Upsilon_{k,l}\equiv{\Upsilon_{\Ikl}}$   satisfying 
the following boundary conditions  
\begin{equation}
\label{eq:edge-const}
g(2k)=g(2k+1) \ \text{and} \  
g(2l-1)=g(2l)
\end{equation}
  will be  denoted by  $\widehat{\Upsilon}_{k,l}$.
\end{defn}

In Proposition  \ref{prop:AllclassicZ}
we gave     one-to-one correspondence  between
 classical supersymmetric ground states 
 and ground-state configurations over $\Z$.  
We can see   analogous correspondence 
 on  finite regions $\Ikl$  as  follows.

\begin{prop}
\label{prop:classical-Ikl}
A  classical state  on the finite system  $\Al(\Ikl)$
  is free-boundary supersymmetric (Definition
   \ref{defn:SUSYstatesIkl}) if and only if  its   associated configuration
 $g(n)$ on  $\Ikl$  belongs to $\widehat{\Upsilon}_{k,l}$ 
(Definition \ref{defn:SUSYboundaryconfig}).
\end{prop}

\begin{proof}
We will  see the if part as follows.
For all $i\in\{k+1, k+2, \cdots, l-1\}$ 
both $q_{2i}$ and $q_{2i}^{\ast}$  annihilate 
any product vector corresponding to 
$g(n)\in \Upsilon_{k,l}$.
 So we only have to see the marginal points $k$ and $l$.  
For any  given $g(n)\in \widehat{\Upsilon}_{k,l}$ its extension 
 to $\{2k-1\}\cup\Ikl\cup \{2l+1\}$  will be denoted as 
 $\tilde{g}(n)$. We see that 
$\tilde{g}(2k-1)$ is  arbitary, 
$\tilde{g}(2k)=\tilde{g}(2k+1)$, 
$\tilde{g}(2l-1)=\tilde{g}(2l)$, and  
 $\tilde{g}(2l+1)$ is  arbitrary.
So there is no forbidden sequence on 
$\{2k-1, 2k, 2k+1\}$. Thus  
 both $q_{2k}$ and $q_{2k}^{\ast}$  annihilate 
any product vector corresponding to 
$\tilde{g}(n)$.
Similarly, both $q_{2l}$ and $q_{2l}^{\ast}$  annihilate 
 the  product vector corresponding to $\tilde{g}(n)$.
The only if part can be shown as in  Theorem 2 \cite{MORIYA-JS}.
\end{proof}

\section{Hidden local fermionic symmetries}
\label{sec:LOCAL}
We will show that there are  infinitely many local fermionic symmetries  
 hidden in  the  Nicolai model.
To this end,  we need some  preparation. 
\begin{defn}
\label{defn:seq-conservation}
Take any  finite interval  $\Ikl$   defined in \eqref{eq:Ikl}.
Let $f$ be a $\{-1, +1\}$-valued  sequence  on   $\Ikl$. 
For any consecutive triplet    
$\{2i-1,\; 2i,\; 2i+1\}\subset \Ikl$ ($i\in\Z$)
assume that  neither
\begin{align}
\label{eq:2i-forbid}
f(2i-1)=-1,\ \ f(2i)=+1,\ \  f(2i+1)=-1  
\end{align}
nor 
\begin{align}
\label{eq:2i-forbid-ura}
f(2i-1)=+1,\ \ f(2i)=-1,\ \ f(2i+1)=+1
\end{align}
holds. Furthermore assume  that $f$ is constant  on the 
left-end  pair sites $\{2k,\; 2k+1\}$ and 
on the right-end  pair sites $\{2l-1,\; 2l\}$:
\begin{align}
\label{eq:leftedge-seq}
f(2k)=f(2k+1)=+1\quad  \text{or}\quad 
f(2k)=f(2k+1)=-1
\end{align}
and 
\begin{align}
\label{eq:rightedge-seq}
f(2l-1)=f(2l)=+1\quad  \text{or}\quad 
f(2l-1)=f(2l)=-1.
\end{align}
The set of all  
$\{-1, +1\}$-valued sequences
on $\Ikl$ satisfying  the above  conditions 
 is  denoted by  $\hat{\Xi}_{k,l}$.
The union of   $\hat{\Xi}_{k,l}$ over
all   $k,l\in\Z$ ($k<l$) is  denoted by $\hat{\Xi}$:
\begin{align}
\label{eq:hatkl-inner}
\hat{\Xi}
:=\bigcup_{k,l\in \Z \;  (k<l)} \hat{\Xi}_{k,l}.
\end{align}
Take  any $p,q\in\Z$  $(p<q)$.
 Let 
\begin{align}
\label{eq:hatkl-inner}
\hat{\Xi}(p,q):=\bigcup_{k, l\in \Z;\;
p\le k 
< l \le  q } \hat{\Xi}_{k,\; l}.
\end{align}
Each $f\in\hat{\Xi}$ 
  is called a   local  $\{-1, +1\}$-sequence of conservation for 
the Nicolai model.
\end{defn}

\begin{rem} 
\label{rem:marginal}
The  requirements 
\eqref{eq:leftedge-seq}
\eqref{eq:rightedge-seq} on the edges of $\Ikl$  are   essential to 
 make  conservation laws   for the Nicolai model.
 \end{rem}

\begin{rem} 
\label{rem:crude}
By  crude estimate 
we  can see that  
the number of local   $\{-1, +1\}$-sequences of  conservation
  in  
 $\hat{\Xi}_{k,l}$ is roughly $(\frac{2^{3}-2}{2})^{(l-k)}=3^{(l-k)}=3^{m/2}$, 
 where  $m=2(l-k)$  denotes approximately the  size of the system 
(i.e.  the number of sites in  $\Ikl$).
 \end{rem}

It is convenient to  consider   the following   subclasses 
 of $\hat{\Xi}$. 
\begin{defn}
\label{defn:ID-conserved}
For each $k,l\in\Z$ ($k<l$) let $\IDplkl\in \hat{\Xi}_{k,l}$ 
 and $\IDmikl\in \hat{\Xi}_{k,l}$ denote  the constants over $\Ikl$ as 
\begin{equation}
\label{eq:constkl}
\IDplkl(i)=+1 \  \forall i\in \Ikl,\quad    
\IDmikl(i)=-1 \  \forall i\in \Ikl. 
\end{equation}
The set    
$\left\{\IDplkl\right\}$
over all   $k,l\in\Z$ ($k<l$) will be denoted as $\Xiconpl$,
 and  the set   $\left\{ \IDmikl \right\}$
over all   $k,l\in\Z$ ($k<l$) will be denoted as $\Xiconmi$.
 Let $\Xicon:=\Xiconpl\cup \Xiconmi$.
\end{defn}

\vspace{5mm}

We shall  give  a rule to assign a 
  local fermion operator  for every    
local $\{-1, +1\}$-sequence   of conservation 
 of   Definition \ref{defn:seq-conservation}.
\begin{defn}
\label{defn:local-assignment}
For each $i\in\Z$
  let $\zetai$ denote the assignment   
from $\{-1,+1\}$ into  the fermion field at $i$ given as  
\begin{align}
\label{eq:zeta}
\zetai(-1):= \ci,\quad \zetai(+1):= \cicr.
\end{align}
 Take  any  pair of integers $k,l\in\Z$ such that $k<l$.
For  each  $f\in \hat{\Xi}_{k,l}$, set 
\begin{align}
\label{eq:Qseq}
\Qseq(f)
&:=\prod_{i=2k}^{2l} \zetai\left(f(i)\right)\nonumber\\
&\equiv \zeta_{2k} 
\left(f(2k)\right) \zeta_{2k+1}\left(f(2k+1)\right) 
\cdots 
\cdots \zeta_{2l-1}\left(f(2l-1)\right) 
\zeta_{2l}\left(f(2l)\right)\in 
{\Al}(\Ikl)_{-},
\end{align}
 where the  multiplication 
  is  taken in the  increasing  order as above. 
The  formulas \eqref{eq:Qseq} for  all   $k,l\in\Z$ ($k<l$)
   yield   a unique   assignment $\Qseq$  from $\hat{\Xi}$ to $\coreo$.  
\end{defn}

By   Definition \ref{defn:local-assignment}
 the following  local fermion operators
are assigned to   $\pm$-characters supported on the segment $\Ikl$
   of   Definition   \ref{defn:ID-conserved}. 
  For $k,l\in\Z$ ($k<l$)
\begin{align}
\label{eq:Qseqconst}
\Qseq(\IDplkl)&:=\ctwokcr \ctwokpcr \cdots 
\ctwolmcr \ctwolcr \in 
{\Al}(\Ikl)_{-}, \nonumber\\
\Qseq(\IDmikl)&:=\ctwok \ctwokp \cdots 
\ctwolm \ctwol \in 
{\Al}(\Ikl)_{-}.
\end{align}

The following is the main result of this section.
\begin{thm}
\label{thm:NICconstant}
For  every $f\in\hat{\Xi}$ 
\begin{equation}
\label{eq:naive-fconst}
[H,\; \Qseq(f)]=0=
[H,\; \Qseq(f)^{\ast}],
\end{equation}
where $H$ denotes the Hamiltonian of the Nicolai model 
 over $\Z$.
\end{thm}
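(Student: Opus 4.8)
The plan is to reduce the statement to two anticommutator identities and then establish those by a local Pauli-exclusion argument. First I would record that, since $H=\{Q,\Qast\}$ and since $\Qseq(f)$, $Q$, $\Qast$ are all odd, a routine expansion of products gives the identity
\begin{equation*}
[H,\; \Qseq(f)] = Q\{\Qast, \Qseq(f)\} - \{Q, \Qseq(f)\}\Qast + \Qast\{Q, \Qseq(f)\} - \{\Qast, \Qseq(f)\}Q .
\end{equation*}
Hence it suffices to prove the two relations $\{Q, \Qseq(f)\}=0$ and $\{\Qast, \Qseq(f)\}=0$; the corresponding statement for $\Qseq(f)^{\ast}$ then follows by taking adjoints, because $\{Q, \Qseq(f)\}^{\ast}=\{\Qast, \Qseq(f)^{\ast}\}$ and $\{\Qast, \Qseq(f)\}^{\ast}=\{Q, \Qseq(f)^{\ast}\}$, so $\Qseq(f)^{\ast}$ also anticommutes with both supercharges and the same identity yields $[H,\Qseq(f)^{\ast}]=0$.

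Next I would localize. Writing $Q=\sum_i q_{2i}$ with $q_{2i}=c_{2i+1}c^{\ast}_{2i}c_{2i-1}$ supported on $\{2i-1,2i,2i+1\}$, I note that any $q_{2i}$ whose support is disjoint from $[2k,2l]$ anticommutes with the odd operator $\Qseq(f)$ and so contributes $0$; thus $\{Q,\Qseq(f)\}=\sum_{i=k}^{l}\{q_{2i},\Qseq(f)\}$ is a finite sum, and it is enough to show each term vanishes (and likewise for $\Qast$). The mechanism is a Pauli clash: if at some common site $s$ the factor of $q_{2i}$ and the factor $\zeta_{s}(f(s))$ of $\Qseq(f)$ are operators of the same type (both annihilation or both creation), then — since each of the two words carries exactly one letter at site $s$ and all remaining letters live at distinct sites — one may anticommute them adjacent and invoke $c_s^2=0$ or $c_s^{\ast 2}=0$ to see that both $q_{2i}\Qseq(f)$ and $\Qseq(f)q_{2i}$ vanish, whence $\{q_{2i},\Qseq(f)\}=0$.

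It then remains to verify that such a clash always occurs. For an interior term ($k+1\le i\le l-1$) all three sites $2i-1,2i,2i+1$ lie in $[2k,2l]$, and comparing types one checks that no clash with $q_{2i}$ occurs only when $f=(+1,-1,+1)$ there, while no clash with $q_{2i}^{\ast}$ occurs only when $f=(-1,+1,-1)$ there; these are precisely the forbidden triplets \eqref{eq:2i-forbid-ura} and \eqref{eq:2i-forbid} excluded by $f\in\hat{\Xi}$, so a clash is guaranteed. For the two boundary terms $i=k$ and $i=l$, only the pair sites $\{2k,2k+1\}$ (respectively $\{2l-1,2l\}$) overlap the support of $\Qseq(f)$, and here the absence of a clash would force $f(2k)\neq f(2k+1)$ (respectively $f(2l-1)\neq f(2l)$), which is exactly what the edge conditions \eqref{eq:leftedge-seq} and \eqref{eq:rightedge-seq} forbid. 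Thus every term clashes and both anticommutators vanish, proving the theorem. The main work — and the only place requiring care — is the bookkeeping of this case analysis together with the verification that the edge conditions are exactly what is needed at the two boundary triplets, which is why Remark \ref{rem:marginal} flags them as essential; the sign factors incurred while moving fermion letters are irrelevant, since each clashing product is identically zero.
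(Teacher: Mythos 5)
Your proof is correct and follows essentially the same route as the paper: reduce $[H,\;\Qseq(f)]=0$ (and its adjoint version) to the vanishing of the anticommutators $\{Q,\Qseq(f)\}$ and $\{\Qast,\Qseq(f)\}$, and then kill each local term $\{q_{2i},\Qseq(f)\}$ by a Pauli clash forced by the forbidden-triplet conditions in the interior and the constant-edge conditions at $i=k,l$. If anything, your version is more careful than the paper's, which asserts $q_{2i}\Qseq(f)=0=\Qseq(f)q_{2i}$ ``for all $i\in\Z$'' --- literally true only for the overlapping terms $k\le i\le l$ --- whereas you correctly dispose of the distant terms by oddness and disjointness of supports before invoking the clash argument.
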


\begin{proof}
This theorem is established  in  \cite{MORIYA-JS}.
Because of its   importance and the reader's convenience,  we will provide 
 its  more formal derivation  below.  
It suffices to show that 
 \begin{equation}
\label{eq:QantiQloc}
\{Q,\ \Qseq(f)\}= \{{Q}^{\ast},\ \Qseq(f)\}=0,
 \end{equation}
and that 
 \begin{equation}
\label{eq:QantiQlocast}
\{Q,\ \Qseq(f)^{\ast}\}= \{{Q}^{\ast},\ \Qseq(f)^{\ast}\}=0,
 \end{equation}
 as the former  implies $[H,\; \Qseq(f)]=0$
 and the latter implies $[H,\; \Qseq(f)^{\ast}]=0$  by 
the graded Leibniz rule of superderivations \eqref{eq:gleib}.
Recall  $Q=\sum_{i\in \Z} q_{2i}$  and 
$q_{2i}\equiv c_{2i+1} c^{\ast}_{2i} c_{2i-1}$ defined in \eqref{eq:Q-NIC}.
 By Definitions \ref{defn:seq-conservation} \ref{defn:local-assignment}, 
  we have 
\begin{align}
\label{eq:vanishQseq}
\Qseq(f)q_{2i}=0=q_{2i}\Qseq(f),\quad 
 \Qseq(f)q_{2i}^{\ast}=0=q_{2i}^{\ast}\Qseq(f)
 \quad  \text{for all}\quad i\in\Z.
\end{align}
From the above  we obtain 
\eqref{eq:QantiQloc} and \eqref{eq:QantiQlocast}.
\end{proof}

Theorem \ref{thm:NICconstant}  
 says   that  the  Nicolai  
model  has infinitely many  local fermionic constants.
Those generate local fermionic  symmetries. 
\begin{defn}
\label{defn:const-motion}
For each local  $\{-1, +1\}$-sequence of conservation  
 $f\in\hat{\Xi}$,  
$\Qseq(f)$ is called the  local fermionic constant of 
   motion  associated  to $f$, and 
the pair 
 $\{\Qseq(f), \Qseq(f)^{\ast}\}$ 
 is called the local fermionic charge
 associated  to $f$.
\end{defn}

\begin{rem} 
\label{rem:kinematical}
Fermionic symmetry satisfying the supersymmetry relation   
other than the dynamical supersymmetry is sometimes called kinematical 
 supersymmetry. See e.g. \cite{NAKChern-Simons}.
Hence the   local fermionic charge  $\{\Qseq(f), \Qseq(f)^{\ast}\}$ 
for any  $f\in\hat{\Xi}$
 gives a local kinematical  supersymmetry.
 \end{rem}

\begin{rem} 
\label{rem:gauge}
 Any operator  of the algebra 
   generated by 
 $\{\Qseq(f)\in \core|\; f\in \hat{\Xi} \}$
 is  a local constant of motion. 
There exist   many such  self-adjoint bosonic  
 operators that  generate  (bosonic) symmetries for the 
 Nicolai model. 
 For example, 
  we obtain a bosonic constant 
$ n_{2k} n_{2k+1}\cdots 
n_{2l-1}n_{2l} \in 
{\Al}(\Ikl)_{+}$ from the multiple 
$\Qseq(\IDplkl)
\Qseq(\IDmikl)$ where each of them is 
 defined in \eqref{eq:Qseqconst}.
 \end{rem}

\section{Degenerate classical supersymmetric ground states and broken 
local fermionic symmetries}
\label{sec:generation}
In  this section we will  
 relate the high degeneracy of ground states  
shown in $\S$ \ref{sec:CLASSICAL}
 to the existence of many local fermionic  symmetries 
shown in $\S$ \ref{sec:LOCAL}.
In particular, we will show  that 
 every  classical supersymmetric ground state can be constructed from  
 (broken) local fermionic  symmetries.

\begin{thm}
\label{thm:gene-allclassics-Ikl}
Take any  segment  $\Ikl$ 
 indexed by $k,l\in\Z$ ($k<l$) as in  \eqref{eq:Ikl}.
Any classical free-boundary supersymmetric ground 
 state on $\Al(\Ikl)$  (Definition
   \ref{defn:SUSYstatesIkl}) can be constructed by  some finitely many 
 applications  of  operators  $\Qseq(f)$ (and $\Qseq(f)^{\ast}$)    
 with $f\in  \hat{\Xi}(k,l)$  (Definition \ref{defn:seq-conservation})
 to  the Fock vector $\Omega_0$ \eqref{eq:Fock-via-config}, 
similarly, to the fully-occupied vector  $\Omega_1$ 
 \eqref{eq:OCCUP-via-config}.
\end{thm}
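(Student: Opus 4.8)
The plan is to reduce the assertion to a purely combinatorial reachability statement and then to prove the latter by an explicit inductive construction. First I would invoke Proposition \ref{prop:classical-Ikl} to replace ``classical open-edge supersymmetric state'' by ``configuration $g\in\widehat{\Upsilon}_{k,l}$'', and Proposition \ref{prop:Classic-vec-config} to represent the corresponding state by the product vector $\OPERA(g)\Omega_{0}=\ket{g(n)_{n\in\Z}}$. Thus it suffices to show that for every $g\in\widehat{\Upsilon}_{k,l}$ the vector $\ket{g(n)_{n\in\Z}}$ is, up to a nonzero scalar, of the form $\Qseq(f_{m})^{\epsilon_{m}}\cdots\Qseq(f_{1})^{\epsilon_{1}}\,\Omega_{0}$ with $f_{j}\in\hat{\Xi}(k,l)$ and $\epsilon_{j}\in\{1,\ast\}$. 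The sitewise map $\Phi:+1\mapsto 1$, $-1\mapsto 0$ identifies $\hat{\Xi}_{k',l'}$ (Definition \ref{defn:seq-conservation}) with $\widehat{\Upsilon}_{k',l'}$ (Definition \ref{defn:SUSYboundaryconfig}), since the forbidden triplets \eqref{eq:2i-forbid}--\eqref{eq:2i-forbid-ura} and the edge conditions correspond exactly; all blocks used below are sub-blocks of $\Ikl$, so the associated $f$ automatically lie in $\hat{\Xi}(k,l)$.

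Next I would record the elementary moves. A direct computation from the CARs shows that on a classical product vector $\Qseq(f)$ creates a fermion at each site where $f=+1$ and annihilates one where $f=-1$; hence $\Qseq(f)$, with $f\in\hat{\Xi}_{k',l'}$ supported on $[2k',2l']$, sends the classical vector whose restriction to $[2k',2l']$ is the complement of $\Phi(f)$ to the one whose restriction is $\Phi(f)$, and annihilates every classical vector not of this form. Since $\Qseq(f)^{\ast}=\pm\,\Qseq(-f)$ and $-f\in\hat{\Xi}_{k',l'}$, both directions of every such ``block flip'' are available. In particular the constants of Definition \ref{defn:ID-conserved} give the moves ``fill an empty even-endpoint block'' $\Qseq(r^{+}_{[2k',2l']})$ and ``empty a full even-endpoint block'' $\Qseq(r^{-}_{[2k',2l']})$. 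The key structural fact I would isolate is that a pattern consisting of a single block of $1$'s flanked by $0$'s on an even-endpoint block lies in $\widehat{\Upsilon}$ whatever the parities of the block of $1$'s, provided its two end-pairs sit in the $0$-regions; its complement is a disjoint union of even-endpoint runs of $1$'s, each produced directly by the fill move.

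The construction then proceeds by induction on the number $m$ of maximal $1$-runs of $g$, building the runs one at a time. To build a run $R=[a,b]$ I fill even-endpoint scaffold runs in the (currently empty) flanks adjacent to $R$ and then apply a single valid flip $\Qseq(f)$ that exchanges this scaffold pattern for $R$, after which the borrowed flank sites return to $0$. When $a$ (resp. $b$) is odd the scaffold must occupy the left (resp. right) flank, so $R$ must be built while that flank is still empty; I would show these ordering requirements are consistent. The only constraints link adjacent runs sharing a gap of length two, and a short parity computation shows such a gap forces a \emph{unique} build direction (the two endpoints bounding a length-two gap have opposite parity, so at most one of them is odd, and a length-one gap with an odd endpoint is forbidden by \eqref{eq:2i-forbid-ura}); as the runs lie along a line this yields an acyclic set of constraints and hence an admissible build order. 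The edge conditions \eqref{eq:leftedge-seq}--\eqref{eq:rightedge-seq} guarantee that runs abutting the boundary of $\Ikl$ have even outer endpoints and that the extreme $0$-blocks have length at least two, so the boundary cases cause no trouble.

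Finally, the statement for the fully-occupied vector $\Omega_{1}$ \eqref{eq:OCCUP-via-config} follows from the one for $\Omega_{0}$ \eqref{eq:Fock-via-config} by the particle--hole transformation $\rho$ of \eqref{eq:particle-hole}: $\rho$ is an automorphism with $\rho(\Qseq(f))=\Qseq(-f)$, it carries $\Omega_{0}$ to $\Omega_{1}$ and each configuration $g$ to its complement $\bar g$, and $g\mapsto\bar g$ is a bijection of $\widehat{\Upsilon}_{k,l}$. The main obstacle is the third step: verifying that an admissible ordering of the fill-and-flip moves exists and that every intermediate vector is a nonzero classical vector (no premature annihilation), i.e.\ the parity bookkeeping at the run endpoints; everything else is routine once the elementary moves are in hand.
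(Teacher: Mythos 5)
Your proposal is correct, but it takes a genuinely different route from the paper's. The paper reduces to segments $[0,2n]$ by translation and then inducts on the length $n$ of the interval: for $g\in\widehat{\Upsilon}_{0,n+1}$ it runs a case analysis on the four boundary types (Cases I--IV, nine sub-patterns each); every sub-pattern except one restricts to a configuration in $\widehat{\Upsilon}_{0,n}$ or $\widehat{\Upsilon}_{1,n+1}$ and is handled by the induction hypothesis, and the single problematic pattern (type I-9, with $0,0,0,1$ at the left end and $1,0,0,0$ at the right end) is cured by first acting with the constant charge $\Qseq(\IDplzeroone)$ so that the result again lives in a shorter window; the interchangeability of $\Omega_0$ and $\Omega_1$ (Lemma \ref{lem:vacuum-occupied-start}) is invoked \emph{inside} the induction because the restricted configurations sometimes sit on a fully-occupied background. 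You instead work directly on the target configuration with no induction on the interval: you decompose $g$ into maximal $1$-runs and realize each run by a bounded number of moves (fill the flank scaffolds, then one flip), resolving the ordering constraints by the parity analysis at length-two gaps and acyclicity along the line. Your approach buys an explicit algorithm with at most three charge actions per run and a transparent account of where ordering constraints arise (only length-two gaps, the run with the odd endpoint at the gap being built first); the paper's approach buys steps that are individually trivial to verify (restriction to a sub-window plus one constant charge), at the cost of heavier case bookkeeping. Two points to tighten when writing yours up: the claim that a single block of $1$'s flanked by $0$'s is admissible ``whatever the parities'' fails for a one-site run at an even site (that is exactly the forbidden pattern $(0,1,0)$ centered at an even site), but such runs never occur in elements of $\widehat{\Upsilon}_{k,l}$, so the construction is unaffected; and it is worth stating explicitly that intermediate configurations need satisfy no global condition---only the local precondition of each move (the block equals the complement of $\Phi(f)$) matters, which your build order guarantees.
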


By  Definition \ref{defn:CLASSIC-config}
 we can  identify  every  classical supersymmetric ground state
 $\psi_{g(n)}$ on $\Al$ with  its corresponding 
classical configuration $g(n)$  over  $\Z$, and vice versa.
By Proposition \ref{prop:classical-Ikl} we can identify 
 the set of all  classical free-boundary supersymmetric ground states
 on $\Al(\Ikl)$ (Definition \ref{defn:SUSYstatesIkl}) 
  with $\widehat{\Upsilon}_{k,l}$  (Definition \ref{defn:SUSYboundaryconfig}). 
 We will  frequently use those identifications in what follows.

The following lemma  implies  that the latter  part (using $\Omega_1$) of  
Theorem \ref{thm:gene-allclassics-Ikl} holds
once the former part (using $\Omega_0$) is proved. 
Also it will    be used  frequently in  the proof.
\begin{lem}
\label{lem:vacuum-occupied-start}
For   any  $n\in\NN$,    
if   $g\in \widehat{\Upsilon}_{0,n}$ 
can be constructed by some finitely many  applications of 
local fermionic  charges within $\Izeron$
 to the Fock vector $\Omega_0$.
Then it  can be constructed 
by some  applications of 
local fermionic  charges within $\Izeron$ to  
the fully-occupied state $\Omega_1$.
\end{lem}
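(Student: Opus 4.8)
The plan is to show that $\Omega_0$ is itself reachable from $\Omega_1$ by a single admissible charge, so that the two starting vectors become interchangeable and the hypothesized construction can be reused verbatim. The key observation is that the constant $-1$ sequence $\IDmizeron\in\hat{\Xi}(0,n)$ of Definition \ref{defn:ID-conserved} is a legitimate local $\{-1,+1\}$-sequence of conservation: being constant on all of $\Izeron$ it trivially meets the edge requirements \eqref{eq:leftedge-seq} \eqref{eq:rightedge-seq} and, having no alternation, it contains neither forbidden triplet \eqref{eq:2i-forbid} nor \eqref{eq:2i-forbid-ura}. Hence $\Qseq(\IDmizeron)=c_0c_1\cdots c_{2n}$ (see \eqref{eq:Qseqconst}) is an admissible local fermion charge within $\Izeron$.

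First I would record the elementary computation that $\Qseq(\IDmizeron)\,\Omega_1=c_0c_1\cdots c_{2n}\,\Omega_1=\pm\,\Omega_0$ on the subsystem $\Al(\Izeron)$: each annihilation operator removes exactly one of the sites $0,1,\dots,2n$, all of which are distinct and occupied in $\Omega_1$, so the result is nonzero and, as a state on $\Al(\Izeron)$, coincides with the Fock state induced by $\Omega_0$. On the full lattice the vectors $\Qseq(\IDmizeron)\,\Omega_1$ and $\Omega_0$ differ outside $\Izeron$, but this is harmless because every operator used below is localized in $\Izeron$ and only the induced states on $\Al(\Izeron)$ are compared.

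Now suppose $g\in\widehat{\Upsilon}_{0,n}$ is obtained from $\Omega_0$ by a finite product $P$ of charges $\Qseq(f)$ and $\Qseq(f)^{\ast}$ with $f\in\hat{\Xi}(0,n)$, so that $P\,\Omega_0$ induces on $\Al(\Izeron)$ the classical state associated to $g$. I would then simply prepend the charge above, setting $P':=P\,\Qseq(\IDmizeron)$, whose factors are all local fermion charges within $\Izeron$. Since $\Qseq(\IDmizeron)\,\Omega_1$ and $\Omega_0$ induce the same state on $\Al(\Izeron)$ and every factor of $P$ lies in $\Al(\Izeron)$, the vectors $P'\,\Omega_1=P\bigl(\Qseq(\IDmizeron)\,\Omega_1\bigr)$ and $P\,\Omega_0$ induce the same state on $\Al(\Izeron)$, namely the classical state of $g$. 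Here I use the general principle that if $\xi,\eta$ induce the same state on $\Al(\Izeron)$ and $A\in\Al(\Izeron)$, then $A\xi$ and $A\eta$ do as well, since $\langle A\xi,\,BA\xi\rangle=\langle \xi,\,A^{\ast}BA\,\xi\rangle$ with $A^{\ast}BA\in\Al(\Izeron)$; this simultaneously absorbs the irrelevant scalar factors and the discrepancy outside $\Izeron$.

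I expect no serious obstacle: the entire content is the remark that $\Omega_0$ is reachable from $\Omega_1$ by one admissible charge, the only care being the localization bookkeeping just described. An alternative route through the particle-hole symmetry $\rho$ of \eqref{eq:particle-hole} is available, since $\rho$ sends $\Qseq(f)$ to $\Qseq(-f)$ with $-f\in\hat{\Xi}$ and exchanges $\Omega_0$ with $\Omega_1$; however it would deliver the hole configuration rather than $g$ itself, so the direct composition above is preferable.
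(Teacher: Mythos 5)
Your proof is correct, but it takes a genuinely different route from the paper's. The paper proves this lemma via the particle-hole symmetry: it observes that $-f\in\hat{\Xi}(0,n)$ whenever $f\in\hat{\Xi}(0,n)$, that $\Qseq(-f)=\rho(\Qseq(f))$, and concludes that $\Omega_0$ and $\Omega_1$ can therefore be used ``interchangeably.'' You instead exhibit the explicit admissible charge $\Qseq(\IDmizeron)=c_0c_1\cdots c_{2n}$ carrying $\Omega_1$ to (the local restriction of) $\Omega_0$ and prepend it to the hypothesized product, with the state-localization bookkeeping handled correctly. The difference is not merely cosmetic: conjugating a construction $P\Omega_0$ of $g$ by $\rho$ yields $\rho(P)\Omega_1$, which realizes the complement configuration $1-g$ rather than $g$ itself --- exactly the defect you point out in your closing remark. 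Hence the paper's one-line argument proves the lemma as literally stated (for a single $g$) only when supplemented by the fact that the set of configurations constructible from $\Omega_0$ is closed under complementation; this does hold in the intended application, where the induction hypothesis covers all of $\widehat{\Upsilon}_{0,n}$, a complement-closed set, but it is not contained in the lemma's hypothesis. Your pointwise argument avoids this issue entirely, and it is in fact the mechanism the paper itself invokes when the lemma is applied inside Proposition \ref{prop:induction} (``$\Omega_1=\IDplzeronone\Omega_0$ on the segment''). In short, the paper's route is shorter and exploits a global symmetry of the whole charge algebra, while yours is more elementary, proves the statement exactly as written for each individual $g$, and matches how the lemma is actually used in the induction.
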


\begin{proof}
For any $f\in \hat{\Xi}(0,n)$,  
 $-f\in \hat{\Xi}(0,n)$ by definition.  
From \eqref{eq:Qseq} in Definition 
 \ref{defn:local-assignment}
\begin{align}
\label{eq:minus-particle-hole}
\Qseq(-f)=\rho(\Qseq(f)), 
\end{align}
 where $\rho$ is the particle-hole
 transformation  defined in 
\eqref{eq:particle-hole}.
Thus by using the particle-hole transformation, 
we can use  $\Ome_0$ and $\Ome_1$ interchangeably. 
\end{proof}

Obviously it is enough to show Theorem
\ref{thm:gene-allclassics-Ikl}  by setting   $k=0$ and $l=\forall n\in \NN$ by 
 shift-translations. Thus we will prove the following.
\begin{prop}
\label{prop:induction}
For   any  $n\in\NN$,    
every  $g\in \widehat{\Upsilon}_{0,n}$ 
can be constructed by some  applications  of 
local fermionic charges within $\Izeron${\rm{:}}  
\begin{align}
\label{eq:set-of-actions}
\left\{\Qseq(f) ; \ 
f\in \hat{\Xi}(0,n)\equiv 
 \bigcup_{m \in\{1,2,\cdots,n\}}  \hat{\Xi}_{0,m} \right\}
\end{align}
 to  the Fock vector $\Omega_0$ \eqref{eq:Fock-via-config}.
\end{prop}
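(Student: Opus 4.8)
First I would compute the explicit action of $\Qseq(f)$ on a classical product vector, since this is the engine of the whole argument. Fix $f\in\hat{\Xi}_{k,l}$ and introduce the two $\{0,1\}$-configurations on $\I_{k,l}$ defined by $f^{+}(i)=1\Leftrightarrow f(i)=+1$ and $f^{-}(i)=1\Leftrightarrow f(i)=-1$, so that $f^{-}=\overline{f^{+}}$ on $[2k,2l]$. By \eqref{eq:zeta}--\eqref{eq:Qseq} the factor at site $i$ is $\cicr$ when $f(i)=+1$ and $\ci$ when $f(i)=-1$; hence, acting on a product vector $\ket{h(n)_{n\in\Z}}$, one gets $\Qseq(f)\ket{h}=\pm\ket{h'}$ exactly when $h$ restricted to $[2k,2l]$ coincides with $f^{-}$, and $\Qseq(f)\ket{h}=0$ otherwise; in the nonzero case $h'$ agrees with $h$ off $[2k,2l]$ and equals $f^{+}$ on $[2k,2l]$. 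In other words $\Qseq(f)$ complements the configuration on the entire block $[2k,2l]$, provided the current block values are precisely the pattern $f^{-}$. Moreover the conditions defining $\hat{\Xi}_{k,l}$ (no forbidden interior triplet, constancy on the two edge pairs) are exactly those making $f^{+}$, equivalently its complement $f^{-}$, an element of $\widehat{\Upsilon}_{k,l}$. Since $-f\in\hat{\Xi}_{k,l}$ whenever $f\in\hat{\Xi}_{k,l}$ and $\Qseq(-f)$ reverses this move, the reachability relation generated by these operators is symmetric, and any construction of $\ket{g}$ is determined only up to an overall sign, which is immaterial for the associated classical state.

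The plan is then to induct on $n$. For $n=1$ one has $\widehat{\Upsilon}_{0,1}=\{(0,0,0),(1,1,1)\}$; the first is $\Omega_0$ itself and the second is $\Qseq(r^{+}_{[0,2]})\Omega_0$, so the base case holds. Assuming the statement for $n$, take $g\in\widehat{\Upsilon}_{0,n+1}$, a configuration on $[0,2(n+1)]$ with $g(0)=g(1)$ and $c:=g(2n+1)=g(2n+2)$. Before any case analysis I would halve the work with particle--hole symmetry: by \eqref{eq:minus-particle-hole} and Lemma \ref{lem:vacuum-occupied-start}, $\rho$ intertwines the generators (sending $\Qseq(f)$ to $\Qseq(-f)$) and exchanges $\Omega_0$ with $\Omega_1$, so $g$ is constructible from $\Omega_0$ if and only if $\overline{g}$ is. As $\overline{g}\in\widehat{\Upsilon}_{0,n+1}$ carries $\overline{c}$ at its last edge, I may assume $c=1$.

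With $c=1$ there are two subcases according to the edge pair $(g(2n-1),g(2n))$, which by the forbidden-triplet rule at the even site $2n$ can only be $(0,0)$, $(1,1)$ or $(0,1)$. If $g(2n-1)=g(2n)$, then $\overline{g}$ vanishes on $\{2n+1,2n+2\}$ and $\overline{g}|_{[0,2n]}\in\widehat{\Upsilon}_{0,n}$; by the induction hypothesis I reach $\overline{g}$ from $\Omega_0$ using charges in $\hat{\Xi}(0,n)$, and then a single application of $\Qseq(f)$, where $f\in\hat{\Xi}_{0,n+1}$ is the sequence with $f(i)=+1\Leftrightarrow g(i)=1$, complements the full block $[0,2(n+1)]$ from $\overline{g}$ to $g$. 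If instead $(g(2n-1),g(2n))=(0,1)$, I first reach by the induction hypothesis the configuration $\tilde g\in\widehat{\Upsilon}_{0,n}$ that agrees with $g$ on $[0,2n-1]$ but has $\tilde g(2n)=0$, and then apply $\Qseq(r^{+}_{[2n,2(n+1)]})$, which finds the three sites $2n,2n+1,2n+2$ empty and fills them, producing exactly $g$. In both subcases every operator used lies in $\hat{\Xi}(0,n+1)$, so the induction closes.

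The point requiring the most care, and the reason the naive ``restrict $g$ to $[0,2n]$ and invoke the induction hypothesis'' fails, is the boundary-mismatch configuration $(g(2n-1),g(2n))=(0,1)$ (and, via particle--hole, $(1,0)$): here $g|_{[0,2n]}$ violates the open SUSY boundary condition of Definition \ref{defn:SUSYboundaryconfig}, so it is not a legitimate induction target. The two resolutions above are engineered precisely so that, at the moment each $\Qseq(f)$ is applied, the current configuration equals the required pattern $f^{-}$; this nonvanishing condition is the one thing that must be checked at every step. The remaining bookkeeping, namely that the modified target $\tilde g$ lies in $\widehat{\Upsilon}_{0,n}$, that the straddling triplet at site $2n$ is not forbidden after the block flip, and that the complement of a valid block is again a valid block, is routine from Definitions \ref{defn:ground-config} and \ref{defn:seq-conservation}.
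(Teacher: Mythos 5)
Your proof is correct, and it takes a genuinely different route from the paper's. The paper first verifies $n=1,2,3$ by exhaustive tables, then runs the induction by splitting $\widehat{\Upsilon}_{0,n+1}$ into four cases according to \emph{both} edge pairs (Cases I--IV), each with nine boundary patterns: all but one pattern per case restrict to $\widehat{\Upsilon}_{0,n}$ on $[0,2n]$ or to the \emph{translated} family $\widehat{\Upsilon}_{1,n+1}$ on $[2,2(n+1)]$, and the single problematic pattern (e.g.\ I-9, mismatched at both ends) is repaired by acting with $\IDplzeroone$ and then invoking Lemma \ref{lem:vacuum-occupied-start} through $\Omega_1=\IDplzeronone\Omega_0$. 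You instead (i) isolate explicitly the block-flip rule --- $\Qseq(f)\ket{h}\neq 0$ iff $h$ equals $f^{-}$ on $[2k,2l]$, in which case the block is complemented to $f^{+}$, together with the bijection $f\in\hat{\Xi}_{k,l}\Leftrightarrow f^{+}\in\widehat{\Upsilon}_{k,l}$ --- which the paper uses only implicitly; (ii) normalize once, via particle-hole/complementation, so that the right edge is $(1,1)$; and (iii) then need only two subcases on $(g(2n-1),g(2n))$: the matched case by induction on $\overline{g}|_{[0,2n]}$ plus a single whole-block flip with $f^{+}=g$, and the mismatched case $(0,1)$ by induction on the truncated target $\tilde g$ plus the three-site filler $\IDplnnone$. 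Your organization buys a shorter argument: only the $n=1$ base case, no appeal to a translated induction hypothesis $\widehat{\Upsilon}_{1,n+1}$, and a transparent nonvanishing check at each application (which, as you note, is the only thing that can fail); the paper's organization buys concrete low-$n$ tables that double as worked examples of the charge algebra. One shared convention is worth recording: like the paper (which uses $\IDmionetwo$, $\IDmitwothree$, $\IDplnnone$, etc.), you use charges supported on subintervals not anchored at $0$, i.e.\ you read the admissible set \eqref{eq:set-of-actions} as $\hat{\Xi}(0,n)$ in the sense of Definition \ref{defn:seq-conservation} rather than as the literal union $\bigcup_{m}\hat{\Xi}_{0,m}$ displayed there; this matches the paper's own usage, so it is a notational point rather than a gap.
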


\begin{proof}
We need  concrete  forms of  elements in  $\hat{\Xi}$ which 
  are listed  in $\S$\ref{subsec:example-hatXi}.

First, let us consider the case $n=1$.
$\widehat{\Upsilon}_{0,1}$ consists of the following  two sequences
on $\Izeroone$:
\begin{center}
\begin{tabular}{|c||ccc|}
\hline
$\widehat{\Upsilon}_{0,1}$ &$0$ &$1$ &$2$  \\ \hline
$g^{\circ}_{[0,2]}$ &$0$ &$0$ &$0$   \\
\hline
$g^{\bullet}_{[0,2]}$ &$1$ &$1$ &$1$  \\
 \hline\end{tabular}
\end{center}
The classical configuration $g^{\circ}_{[0,2]}$ corresponds to 
 the Fock vector $\Omega_0$ (restricted to the local region 
$[0,1,2]$).  We  shall  write simply   $g^{\circ}_{[0,2]}=\Omega_0$, and 
 this identification will be used hereafter.
On the other hand, $g^{\bullet}_{[0,2]}$ corresponds 
to $\Qseq(\IDplzeroone)\Omega_0$ 
 which is the fully-occupied state on  $\Izeroone$,  where 
$\IDplzeroone\in \hat{\Xi}_{0,1}$. 
Thus we obtain   $g^{\bullet}_{[0,2]}=
\IDplzeroone\Omega_0$ which is  the   desired formula.

Second, let us  consider the case $n=2$.
$\widehat{\Upsilon}_{0,2}$ consists of the following 6 sequences 
on $\Izerotwo$:
\begin{center}
\begin{tabular}{|c||cc|c|cc|}
\hline
$\widehat{\Upsilon}_{0,2}$ &$0$ &$1$ &$2$ &$3$ &$4$ \\ \hline
$g^{\circ}_{[0,4]}$ &$0$ &$0$ &$0$ &$0$ &$0$  \\
\hline
$g_{[0,4]}^{1}$  &$0$ &$0$ &$0$ &$1$ &$1$  \\
$g_{[0,4]}^{2}$  &$0$ &$0$ &$1$ &$1$ &$1$  \\
\hline
$g_{[0,4]}^{3}$  &$1$ &$1$ &$1$ &$0$ &$0$  \\
$g_{[0,4]}^{4}$  &$1$ &$1$ &$0$ &$0$ &$0$  \\
\hline
$g^{\bullet}_{[0,4]}$ &$1$ &$1$ &$1$ &$1$ &$1$  \\
 \hline\end{tabular}
\end{center}
 We have  $g^{\circ}_{[0,4]}=\Omega_0$
 and $g^{\bullet}_{[0,4]}=\IDplzerotwo\Omega_0=
c^{\ast}_{0} c^{\ast}_{1} c^{\ast}_{2}c^{\ast}_{3} c^{\ast}_{4}
\Omega_0$ (the fully-occupied state on $\Izerotwo$) 
 according to \eqref{eq:Xi0-2}.
We have
\begin{align*}
g^{3}_{[0,4]}=\IDplzeroone\Omega_0=
c^{\ast}_{0} c^{\ast}_{1} c^{\ast}_{2}
\Omega_0,\ \ \IDplzeroone\in \hat{\Xi}_{0,1} \nonumber\\
g^{2}_{[0,4]}=\IDplonetwo\Omega_0=
c^{\ast}_{2} c^{\ast}_{3} c^{\ast}_{4}\Omega_0,
\ \ \IDplonetwo\in \hat{\Xi}_{1,2}. 
\end{align*}
To get  $g^{1}_{[0,4]}$ and $g^{4}_{[0,4]}$
 we  use `double'  applications  as 
\begin{align}
\label{eq:g104}
g^{1}_{[0,4]}=\IDmizeroone g^{\bullet}_{[0,4]}=
\IDmizeroone
\IDplzerotwo\Omega_0
= c^{\ast}_{3} c^{\ast}_{4}
\Omega_0, \quad \IDmizeroone\in \hat{\Xi}_{0,1}, 
\ \IDplzerotwo\in \hat{\Xi}_{0,2},
\end{align}
 and  similarly 
\begin{align}
\label{eq:g404}
g^{4}_{[0,4]}=\IDmionetwo g^{\bullet}_{[0,4]}=
\IDmionetwo
\IDplzerotwo\Omega_0
= c^{\ast}_{0} c^{\ast}_{1}
\Omega_0, \quad \IDmionetwo\in \hat{\Xi}_{1,2}, 
\ \IDplzerotwo\in \hat{\Xi}_{0,2}.
\end{align}
We have derived  all the elements  of  $\widehat{\Upsilon}_{0,2}$ and
 accordingly  
  all the  classical free-boundary supersymmetric ground states
 on $\Al(\Izerotwo)$ from  the Fock vector $\Omega_0$.
Let us note that $g^{1}_{[0,4]}$ and $g^{3}_{[0,4]}$
 are mapped to each other  by the particle-hole  transformation, and 
 so are  $g^{2}_{[0,4]}$ and $g^{4}_{[0,4]}$.
However, as shown  above, we do {\it{not}} need 
 to use the particle-hole  transformation. 

In an  analogous manner, we can get 
 all the elements  of  $\widehat{\Upsilon}_{0,2}$
 (all the  classical free-boundary supersymmetric ground states
 on $\Al(\Izerotwo)$) from  the fully-occupied  vector 
 $\Omega_1$ in place of $\Omega_0$.
 This fact  is important.
 \ \\

Let us consider  the case $n=3$.
$\widehat{\Upsilon}_{0,3}$ consists of the following 18  sequences 
on $\Izerothree$:
\begin{center}
\begin{tabular}{|c||cc|ccc|cc|}
\hline
$\widehat{\Upsilon}_{0,3}$ &$0$ &$1$ &$2$ &$3$ &$4$ &$5$ &$6$ \\ \hline
$g^{\circ}_{[0,6]} $ &$0$ &$0$ &$0$ &$0$ &$0$ &$0$ &$0$  \\
$g^{1}_{[0,6]}$ &$0$ &$0$ &$0$ &$1$ &$1$ &$0$ &$0$  \\
$g^{2}_{[0,6]}$ &$0$ &$0$ &$1$ &$1$ &$0$ &$0$ &$0$  \\
$g^{3}_{[0,6]}$ &$0$ &$0$ &$0$ &$1$ &$0$ &$0$ &$0$  \\
$g^{4}_{[0,6]}$ &$0$ &$0$ &$1$ &$1$ &$1$ &$0$ &$0$  \\
\hline
$g^{5}_{[0,6]}$  &$0$ &$0$ &$0$ &$0$ &$0$ &$1$ &$1$  \\
$g^{6}_{[0,6]}$   &$0$ &$0$ &$0$ &$0$ &$1$ &$1$ &$1$   \\
$g^{7}_{[0,6]}$   &$0$ &$0$ &$0$ &$1$ &$1$ &$1$ &$1$   \\
$g^{8}_{[0,6]}$   &$0$ &$0$ &$1$ &$1$ &$1$ &$1$ &$1$   \\
\hline
$g^{9}_{[0,6]}$   &$1$ &$1$ &$1$ &$1$ &$1$ &$0$ &$0$   \\
$g^{10}_{[0,6]}$   &$1$ &$1$ &$1$ &$1$ &$0$ &$0$ &$0$   \\
$g^{11}_{[0,6]}$   &$1$ &$1$ &$1$ &$0$ &$0$ &$0$ &$0$   \\
$g^{12}_{[0,6]}$   &$1$ &$1$ &$0$ &$0$ &$0$ &$0$ &$0$  \\
\hline
$g^{\bullet}_{[0,6]}$ &$1$ &$1$ &$1$ &$1$ &$1$ &$1$ &$1$  \\
$g^{13}_{[0,6]}$ &$1$ &$1$ &$1$ &$0$ &$0$ &$1$ &$1$  \\
$g^{14}_{[0,6]}$ &$1$ &$1$ &$0$ &$0$ &$1$ &$1$ &$1$  \\
$g^{15}_{[0,6]}$ &$1$ &$1$ &$1$ &$0$ &$1$ &$1$ &$1$  \\
$g^{16}_{[0,6]}$ &$1$ &$1$ &$0$ &$0$ &$0$ &$1$ &$1$  \\
 \hline
\end{tabular}
\end{center}

We will  generate all the above   
$g^{\ast}_{[0,6]}$. 
Obviously $g^{\circ}_{[0,6]}=\Omega_0$, 
 and $g^{\bullet}_{[0,6]}=\IDplzerothree\Omega_0=
c^{\ast}_{0} c^{\ast}_{1} c^{\ast}_{2}c^{\ast}_{3} c^{\ast}_{4}
c^{\ast}_{5} c^{\ast}_{6}\Omega_0$ which is the fully-occupied 
 vector  $\Omega_1$
restricted to  $\Izerothree$.

The restriction of $g^{1}_{[0,6]}$ to $[0,4]$ is
 $g^{1}_{[0,4]}$,     
and the restriction of $g^{4}_{[0,6]}$ to $[0,4]$ is
 $g^{2}_{[0,4]}$.
The restriction of $g^{2}_{[0,6]}$ to $[2,6]$ is
 $g^{4}_{[2,6]}$,  which is the translation   of 
 $g^{4}_{[0,4]}$ used before.
Thus each  of $g^{1}_{[0,6]}$
 $g^{4}_{[0,6]}$ and $g^{2}_{[0,6]}$
 can be given as in  the case $n=2$.

The restriction of $g^{13}_{[0,6]}$ to $[0,4]$ is
 $g^{3}_{[0,4]}$,     
 the restriction of $g^{14}_{[0,6]}$ to $[2,6]$ is
 $g^{2}_{[2,6]}$, 
 the restriction of $g^{16}_{[0,6]}$ to $[0,4]$ is
 $g^{4}_{[0,4]}$ (also  the restriction of $g^{16}_{[0,6]}$ to $[2,6]$ is
 $g^{4}_{[2,6]}$).
Hence  each  of $g^{13}_{[0,6]}$,
 $g^{14}_{[0,6]}$ and $g^{16}_{[0,6]}$
 can be given as in the case $n=2$.
Note that   
 $\Omega_1|_{[0,6]}=\IDplzerothree\Omega_0|_{[0,6]}$
 with $\IDplzerothree\in \Al(\Izerothree)$.
Therefore  each  of $g^{13}_{[0,6]}$
 $g^{14}_{[0,6]}$ and $g^{16}_{[0,6]}$
 can be generated  by local supercharges  in $[0,6]$ 
applied to $\Omega_0$.

We have 
\begin{align*}
g^{3}_{[0,6]}=\IDmizeroone \IDmitwothree
\Omega_1=
\IDmizeroone \IDmitwothree
\IDplzerothree\Omega_0= c^{\ast}_{3}\Omega_0,
\end{align*}
and 
\begin{align*}
g^{15}_{[0,6]}=\IDplzeroone \IDpltwothree
\Omega_0=c^{\ast}_{0}c^{\ast}_{1} c^{\ast}_{2}
c^{\ast}_{4}c^{\ast}_{5} c^{\ast}_{6}\Omega_0.
\end{align*}

We have 
\begin{align*}
g^{5}_{[0,6]}&=\IDmizerotwo
\Omega_1=
 \IDmizerotwo
\IDplzerothree\Omega_0= c^{\ast}_{5}c^{\ast}_{6}
\Omega_0, \nonumber\\
g^{6}_{[0,6]}&=
\IDpltwothree\Omega_0=c^{\ast}_{4}c^{\ast}_{5}c^{\ast}_{6}\Omega_0,\nonumber\\
g^{7}_{[0,6]}&=
\IDmizeroone
\Omega_1=
\IDmizeroone\IDplzerothree{\Omega_0}
=c^{\ast}_{3} c^{\ast}_{4}c^{\ast}_{5}c^{\ast}_{6}\Omega_0,\nonumber\\
g^{8}_{[0,6]}&=
\IDplonethree{\Omega_0}=
c^{\ast}_{2}c^{\ast}_{3}
c^{\ast}_{4}c^{\ast}_{5}c^{\ast}_{6}\Omega_0,\nonumber\\
\end{align*}
 and  similarly 
\begin{align*}
g^{9}_{[0,6]}&=\IDplzerotwo\Omega_0=
c^{\ast}_{0}
c^{\ast}_{1}c^{\ast}_{2} c^{\ast}_{3}c^{\ast}_{4}
\Omega_0, \nonumber\\
g^{10}_{[0,6]}&=
\IDmitwothree
\Omega_1=
 \IDmitwothree
\IDplzerothree\Omega_0= 
c^{\ast}_{0}
c^{\ast}_{1}c^{\ast}_{2} c^{\ast}_{3}
\Omega_0, \nonumber\\
g^{11}_{[0,6]}&=
\IDplzeroone{\Omega_0}=
c^{\ast}_{0}
c^{\ast}_{1}c^{\ast}_{2}
\Omega_0,\nonumber\\
g^{12}_{[0,6]}&=
\IDmionethree
\Omega_1=
 \IDmionethree
\IDplzerothree \Omega_0= 
c^{\ast}_{0}c^{\ast}_{1}\Omega_0.
\end{align*}

We have now derived  all the sequences of  $\widehat{\Upsilon}_{0,3}$, i.e. 
  all the  classical free-boundary supersymmetric ground states
 on $\Al(\Izerothree)$.

\vspace*{1cm}

We will start the argument of induction.
We have  verified  the statement for $n=1,2,3$.
Now let us assume that the statement holds for any 
 integer from $1\in\NN$ up to $n\in\NN$.
We are going to  show  that the statement holds  for $n+1\in\NN$.
Concretely,   we will   construct $\widehat{\Upsilon}_{0,n+1}$
 from  $\widehat{\Upsilon}_{p,q}$  ($0\le p <q \le n+1$) 
where $0<p$ or $q<n+1$.

We  divide   $\widehat{\Upsilon}_{0,n+1}$ into four cases  (Case I-IV) as below. We shall indicate  how the induction argument can be applied to each
 of them.\\

\noindent Case I:\\
We deal with all $g\in \widehat{\Upsilon}_{0,n+1}$
 whose  left and right  ends  are  
\begin{align}
\label{eq:caseI}
g(0)=g(1)=0,\ \ g(2n+1)=g(2(n+1))=0. 
\end{align}

\begin{center}
\begin{tabular}{|c||cc|cc|ccc|cc|cc|}
\hline
$\widehat{\Upsilon}_{0,n+1}$ &
$0$ &$1$ &$2$ &$3$ &$ \cdots$ &$ \cdots$ &$\cdots $ &$2n-1$ &$2n$ &$2n+1$ &$2(n+1)$ 
\\ \hline
I-1& $0$ &$0$ &$0$ &$0$ &$*** $ &$***$ &$***$ &$0$ &$0$ &$0$ &$0$ \\
\hline
I-2&$0$ &$0$ &$0$ &$0$ &$*** $ &$***$ &$***$ &$1$ &$1$ &$0$ &$0$ \\
\hline
I-3&$0$ &$0$ &$0$ &$0$ &$*** $ &$***$ &$***$ &$1$ &$0$ &$0$ &$0$ \\
\hline
I-4&$0$ &$0$ &$1$ &$1$ &$*** $ &$***$ &$***$ &$0$ &$0$ &$0$ &$0$ \\
\hline
I-5&$0$ &$0$ &$1$ &$1$ &$*** $ &$***$ &$***$ &$1$ &$1$ &$0$ &$0$ \\
\hline
I-6&$0$ &$0$ &$1$ &$1$ &$*** $ &$***$ &$***$ &$1$ &$0$ &$0$ &$0$ \\
\hline
I-7&$0$ &$0$ &$0$ &$1$ &$*** $ &$***$ &$***$ &$0$ &$0$ &$0$ &$0$ \\
\hline
I-8&$0$ &$0$ &$0$ &$1$ &$*** $ &$***$ &$***$ &$1$ &$1$ &$0$ &$0$ \\
\hline
I-9&$0$ &$0$ &$0$ &$1$ &$*** $ &$***$ &$***$ &$1$ &$0$ &$0$ &$0$ \\
\hline
\end{tabular}
\end{center}

{Note that `$***$'s in the middle mean some appropriate 
sequences of $0,1$ so that 
 the sequence belongs to $\widehat{\Upsilon}_{0,n+1}$, not being arbitrary.}

All the above elements 
 in  $\widehat{\Upsilon}_{0,n+1}$ except  I-9 
 belong to $\widehat{\Upsilon}_{1,n+1}$
 or to $\widehat{\Upsilon}_{0,n}$
 when being restricted to $[2, 2(n+1)]$ or 
to $[0, 2n]$, respectively.
By applying  $\IDplzeroone$ to the vector of I-9, we get   
\begin{center}
\begin{tabular}{|c||cc|cc|ccc|cc|cc|}
\hline
$\widehat{\Upsilon}_{0,n+1}$ &
$0$ &$1$ &$2$ &$3$ &$ \cdots$ &$ \cdots$ &$\cdots $ &$2n-1$ &$2n$ &$2n+1$ &$2(n+1)$ 
\\ \hline
 New I-9&$1$ &$1$ &$1$ &$1$ &$*** $ &$***$ &$***$ &$1$ &$0$ &$0$ &$0$ \\
\hline
\end{tabular}
\end{center}
``New I-9''  above 
 belongs to 
  $\widehat{\Upsilon}_{1,n+1}$
 when being restricted to $[2, 2(n+1)]$. 
Therefore we can obtain New I-9
by applying  some local supercharges in  
$[2, 2(n+1)]$ to  $\Omega_1$ (not $\Omega_0$ here).
Note that  $\Omega_1=\IDplzeronone\Omega_0$
 on the segment $[0, 2(n+1)]$ as noted in Lemma \ref{lem:vacuum-occupied-start}.   Hence  we can construct 
  I-9 by applying  some local supercharges in  
$[0, 2(n+1)]$ to   $\Omega_0$.
In this way we have  made all the configurations 
  of  Case I by the specified rule.

By applying  $\IDplnnone$ to the vector of  I-9, we get
\begin{center}
\begin{tabular}{|c||cc|cc|ccc|cc|cc|}
\hline
$\widehat{\Upsilon}_{0,n+1}$ &
$0$ &$1$ &$2$ &$3$ &$ \cdots$ &$ \cdots$ &$\cdots $ &$2n-1$ &$2n$ &$2n+1$ &$2(n+1)$ 
\\ \hline
 New I-9(2)&$0$ &$0$ &$0$ &$1$ &$*** $ &$***$ &$***$ &$1$ &$1$ &$1$ &$1$ \\
\hline
\end{tabular}
\end{center}
``New I-9(2)''  above 
 belongs to 
  $\widehat{\Upsilon}_{0,n}$
 when being restricted to $[0, 2n]$. 
Therefore we can obtain New I-9(2)
by applying  some local supercharges in  
$[0, 2n]$ to  $\Omega_1$ (not $\Omega_0$ here).
By noting  Lemma \ref{lem:vacuum-occupied-start}   we can construct 
I-9 by applying  some local supercharges in  
$[0, 2(n+1)]$ to  $\Omega_0$.

\ \\

\noindent Case II:\\
We deal with all $g\in \widehat{\Upsilon}_{0,n+1}$
 whose  left and right  ends  are  
\begin{align}
\label{eq:caseII}
g(0)=g(1)=1,\ \ g(2n+1)=g(2(n+1))=1. 
\end{align}
The proof for Case II can be done in the same way 
 as done for Case I.\\

\noindent Case III:\\
We deal with all $g\in \widehat{\Upsilon}_{0,n+1}$
 whose  left and right  ends  are  
\begin{align}
\label{eq:caseIII}
g(0)=g(1)=0,\ \ g(2n+1)=g(2(n+1))=1. 
\end{align}

\begin{center}
\begin{tabular}{|c||cc|cc|ccc|cc|cc|}
\hline
$\widehat{\Upsilon}_{0,n+1}$ &
$0$ &$1$ &$2$ &$3$ &$ \cdots$ &$ \cdots$ &$\cdots $ &$2n-1$ &$2n$ &$2n+1$ &$2(n+1)$ 
\\ \hline
III-1& $0$ &$0$ &$0$ &$0$ &$*** $ &$***$ &$***$ &$0$ &$0$ &$1$ &$1$ \\
\hline
III-2&$0$ &$0$ &$0$ &$0$ &$*** $ &$***$ &$***$ &$1$ &$1$ &$1$ &$1$ \\
\hline
III-3&$0$ &$0$ &$0$ &$0$ &$*** $ &$***$ &$***$ &$0$ &$1$ &$1$ &$1$ \\
\hline
III-4&$0$ &$0$ &$1$ &$1$ &$*** $ &$***$ &$***$ &$0$ &$0$ &$1$ &$1$ \\
\hline
III-5&$0$ &$0$ &$1$ &$1$ &$*** $ &$***$ &$***$ &$1$ &$1$ &$1$ &$1$ \\
\hline
III-6&$0$ &$0$ &$1$ &$1$ &$*** $ &$***$ &$***$ &$0$ &$1$ &$1$ &$1$ \\
\hline
III-7&$0$ &$0$ &$0$ &$1$ &$*** $ &$***$ &$***$ &$0$ &$0$ &$1$ &$1$ \\
\hline
III-8&$0$ &$0$ &$0$ &$1$ &$*** $ &$***$ &$***$ &$1$ &$1$ &$1$ &$1$ \\
\hline
III-9&$0$ &$0$ &$0$ &$1$ &$*** $ &$***$ &$***$ &$0$ &$1$ &$1$ &$1$ \\
\hline
\end{tabular}
\end{center}

All the above elements 
 in  $\widehat{\Upsilon}_{0,n+1}$ except  III-9 
 belong to $\widehat{\Upsilon}_{1,n+1}$
 or to $\widehat{\Upsilon}_{0,n}$
 when being restricted to $[2, 2(n+1)]$ or 
to $[0, 2n]$, respectively.
By applying  $\IDplzeroone$ to the vector of  III-9, we get   
\begin{center}
\begin{tabular}{|c||cc|cc|ccc|cc|cc|}
\hline
$\widehat{\Upsilon}_{0,n+1}$ &
$0$ &$1$ &$2$ &$3$ &$ \cdots$ &$ \cdots$ &$\cdots $ &$2n-1$ &$2n$ &$2n+1$ &$2(n+1)$ 
\\ \hline
 New III-9&$1$ &$1$ &$1$ &$1$ &$*** $ &$***$ &$***$ &$0$ &$1$ &$1$ &$1$ \\
\hline
\end{tabular}
\end{center}

``New III-9''  above 
 belongs to   $\widehat{\Upsilon}_{1,n+1}$
 when being restricted to $[2, 2(n+1)]$. 
Therefore we can obtain New III-9
by applying  some local supercharges in  
$[2, 2(n+1)]$ to $\Omega_1$ (not $\Omega_0$ here).
Note that  $\Omega_1$ can be constructed from  $\Omega_0$
 on $[0, 2(n+1)]$ by using  local supercharges on $[0, 2(n+1)]$. 
  Thus we can construct 
I-9 by applying  some local supercharges in  $[0, 2(n+1)]$ to  $\Omega_0$.
We have  completed   the assertion for Case III.\\ 
 
\noindent Case IV:\\
We deal with all $g\in \widehat{\Upsilon}_{0,n+1}$
 whose  left and right  ends  are  
\begin{align}
\label{eq:caseII}
g(0)=g(1)=1,\ \ g(2n+1)=g(2(n+1))=0. 
\end{align}
The proof for Case IV is similar to that  for Case III given above.\\

In conclusion, 
for all the cases (Case I-IV)
we have generated all the elements of 
$\widehat{\Upsilon}_{0,n+1}$ 
 from $\widehat{\Upsilon}_{0,n}$ and 
 $\widehat{\Upsilon}_{1,n+1}$.
 Hence by the induction, we have shown the statement. 
\end{proof}

The number of classical supersymmetric ground states 
can be computed explicitly.
\begin{prop}
\label{prop:transfer}
The number of classical free-boundary supersymmetric ground states 
 on $\Izeron$ ($n\in \NN$) 
 is  $2\cdot3^{n-1}$. 
\end{prop}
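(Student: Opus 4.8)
The plan is to reduce the count to a purely combinatorial enumeration and then evaluate it with a transfer matrix. By Proposition \ref{prop:classical-Ikl}, the classical open-edge supersymmetric states on $\Al(\Izeron)$ are in bijection with $\widehat{\Upsilon}_{0,n}$, so it suffices to show $|\widehat{\Upsilon}_{0,n}| = 2\cdot 3^{n-1}$. First I would record the structure of an element $g\in\widehat{\Upsilon}_{0,n}$: writing $a_i := g(2i)$ $(0\le i\le n)$ for the even-site values and $b_i := g(2i+1)$ $(0\le i\le n-1)$ for the odd-site values, the open SUSY boundary condition \eqref{eq:edge-const} forces $a_0=b_0$ and $a_n=b_{n-1}$, while the forbidden-triplet condition of Definition \ref{defn:ground-config} applies exactly to the interior triplets $(b_{i-1},a_i,b_i)$ for $i=1,\dots,n-1$, since each interior even site $2i$ is the center of precisely one such triplet.

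The key step is the local count. For fixed neighboring odd-site values $b_{i-1},b_i$, I would count the admissible interior values $a_i$, i.e. those for which $(b_{i-1},a_i,b_i)$ is neither $``0,1,0"$ nor $``1,0,1"$. A direct check of the four cases gives $1$ admissible value when $b_{i-1}=b_i$ and $2$ admissible values when $b_{i-1}\ne b_i$. Because $a_0,a_n$ are already pinned down by the boundary condition and the interior choices are mutually independent once the odd-site values are fixed, this yields
\begin{equation*}
|\widehat{\Upsilon}_{0,n}|=\sum_{(b_0,\dots,b_{n-1})\in\{0,1\}^n}\ \prod_{i=1}^{n-1}T_{b_{i-1},b_i},\qquad T=\begin{pmatrix}1&2\\2&1\end{pmatrix},
\end{equation*}
where the rows and columns of $T$ are indexed by $\{0,1\}$.

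It then remains to evaluate the sum, which is exactly $u^{\top}T^{\,n-1}u$ with $u=(1,1)^{\top}$. The matrix $T$ has eigenvalues $3$ and $-1$, with $u$ the eigenvector for the eigenvalue $3$; hence $T^{\,n-1}u=3^{\,n-1}u$ and $u^{\top}T^{\,n-1}u=3^{\,n-1}(u^{\top}u)=2\cdot 3^{\,n-1}$, as required. I expect the only place demanding care to be the local count together with the bookkeeping of which sites are free: one must verify that the boundary conditions remove precisely the two end even-sites $a_0,a_n$ from the free variables and that the triplet constraints never couple the odd-site values $b_i$ to one another directly, so that $(b_0,\dots,b_{n-1})$ genuinely ranges over all of $\{0,1\}^n$. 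Equivalently, the same computation gives the recursion $|\widehat{\Upsilon}_{0,n}|=3\,|\widehat{\Upsilon}_{0,n-1}|$, consistent with the values $2,6,18$ obtained in the proof of Proposition \ref{prop:induction}, which furnishes an immediate inductive cross-check.
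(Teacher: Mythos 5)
Your proof is correct, and it rests on the same transfer-matrix idea as the paper, but with a genuinely different decomposition. The paper splits $\Izeron$ into the initial block $\{0,1,2\}$ followed by the pairs $\{2k-1,2k\}$, so its transfer matrix is the $4\times 4$ zero-one matrix \eqref{eq:Trans} acting between pair configurations, with the forbidden triplets encoded in the vanishing entries; the open SUSY boundary condition then selects eight entries of $T^{n-1}$, whose sum the paper asserts to be $2\cdot 3^{n-1}$ without further evaluation. You instead keep only the odd-site values $b_0,\dots,b_{n-1}$ as states and sum out each interior even site $a_i$, producing the $2\times 2$ matrix with entry $1$ for equal neighbours and $2$ for unequal ones, while the boundary condition eliminates $a_0,a_n$ from the free variables and makes the boundary vector the all-ones vector $u$. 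This buys a complete closed-form evaluation: $u$ is exactly the eigenvector of your $T$ for the eigenvalue $3$, and the second eigenvalue $-1$ contributes nothing because its eigenvector $(1,-1)^{\top}$ is orthogonal to $u$; hence $u^{\top}T^{\,n-1}u=2\cdot 3^{\,n-1}$ on the nose, and your values $2,6,18$ agree with the tables in the proof of Proposition \ref{prop:induction}. What the paper's coarser $4\times 4$ version buys in exchange is that it is read off directly from the forbidden-triplet rule, with no local counting lemma needed. One point worth making explicit in your write-up: in the paper's convention the elements of $\widehat{\Upsilon}_{0,n}$ are configurations over $\Z$ supported in $\Izeron$, so there are also triplets centered at $0$ and at $2n$ that reach one site outside the interval (where the configuration vanishes); your restriction to the interior triplets $i=1,\dots,n-1$ is nevertheless exactly right, because the edge conditions $a_0=b_0$ and $a_n=b_{n-1}$ make those two boundary triplets automatically admissible.
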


\begin{proof}
This  computation  is given by the transfer-matrix method. 
We first divide  $\Izeron$
into $n$-sequential pairs as
\begin{equation*} 
\Izeron = \{0, 1,2\}\cup \{3,4\}  \cdots \cup \{2k-1,2k\}\cup \cdots \cup \{2n-1,2n\},
\end{equation*}
 where  the first group exceptionally   consists of three sites  $\{0, 1,2\}$.
On each $\{2k-1, 2k\}$
 all  classical configurations are  possible.
However, to connect $\{2k-1,2k\}$ and
 $\{2k+1,2(k+1)\}$ we have to avoid the forbidden  triplets:
$\{0,1,0\}$  $\{1,0,1\}$ on $\{2k-1,2k, 2k+1\}$.
So the transfer matrix should be
\begin{equation} 
\label{eq:Trans}
T:=
\begin{bmatrix} 
1 & 1&1&1 \\ 
0 & 0&1&1 \\ 
1 & 1&0&0 \\ 
1 & 1&1&1 \\ 
\end{bmatrix}. 
\end{equation} 
By taking the edge condition 
\eqref{eq:edge-const} into account, 
the possible configurations are 
any of 
\begin{align*} 
0,0,0, \ldots ,0,0,\\
0,0,0, \ldots ,1,1,\\
0,0,1, \ldots ,0,0,\\
0,0,1, \ldots ,1,1,\\
1,1,0, \ldots ,0,0,\\
1,1,0, \ldots ,1,1,\\
1,1,1, \ldots ,0,0,\\
1,1,1, \ldots ,1,1,\\
\end{align*}
which correspond to 
$(1,1)$, $(1,4)$,
 $(2,1)$, $(2,4)$, $(3,1)$, $(3,4)$,
 $(4,1)$ and $(4,4)$ elements 
 of $T^{n-1}$, respectively.
 Those   amount to  $2\cdot 3^{n-1}$.
\end{proof}

\section{{Discussion}}
\label{sec:dis}
We  determined  
all {\it{classical}} supersymmetric ground states 
 of the Nicolai supersymmetric fermion lattice model, 
 and  explained   the  high degeneracy of ground states  
 by  breakdown of its  infinitely many  local fermionic symmetries.
The above  finding  
 may  recall other supersymmetric models with many ground states 
  such as  the supersymmetric fermion lattice model
  by Fendley-Schoutens-de Boer-Nienhuis \cite{FEN1} \cite{FEN2}  
on two-dimensional lattice \cite{vanEER}, and some   Wess-Zumino supersymmetry  quantum mechanical model  \cite{ARAI}.

In  
\cite{RUBEN}  the exact 
 number of ground states on finite systems of the Nicolai model is shown, 
 but  the precise form of these states    is  not specified.
To determine all the  ground states (furthermore all eigenstates) 
beyond the classical ground  states discussed  in this paper 
we need  more detailed spectral property 
 of the Hamiltonian and its symmetries (including bosonic ones).

 In \cite{SAN} \cite{MORIYA-prd}
  an extended version of Nicolai model
 that breaks its dynamical supersymmetry is  studied.
As we have seen, the (original) Nicolai model   
 does not break its dynamical supersymmetry.   
However,  it will break its  hidden 
 supersymmetries for some ground states. 
It would be interesting to discuss breakdown 
 of these  hidden  fermionic  symmetries.

The final comment is concerned with some technical point.
We have chosen special 
  subregions ($\Ikl$) and the boundary conditions 
(the free-boundary supersymmetric condition) which  
 seem  artificial. 
However, as long as we consider classical states only,  
 there is  no loss of  generality with this choice as follows. 
\begin{prop}
\label{prop:Iklenough}
Given any finite subset  $\Lambda$ of $\Z$. 
Any classical  supersymmetric  state on $\Lambda$ can be given by restriction 
 of  some classical  free-boundary supersymmetric ground  state on some larger $\Ikl$
  that includes  $\Lambda$. 
\end{prop}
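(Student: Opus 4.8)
The plan is to move everything into the configuration language of Section~\ref{sec:CLASSICAL} and then to build, by hand, a ground-state configuration on a suitable even-edged interval that restricts to the given one and meets the open SUSY boundary condition. First I would translate the hypothesis. By Definition~\ref{defn:CLASSIC-config} a classical state on $\Al(\Lambda)$ is the product state $\psi_g$ determined by a $\{0,1\}$-valued function $g$ on $\Lambda$. Reading supersymmetry on the subsystem in the sense compatible with Theorem~\ref{thm:AllclassicZ}, namely as being the restriction of a globally supersymmetric classical state on $\Al$, the hypothesis furnishes a global ground-state configuration $\tilde g \in \Upsilon$ with $\tilde g|_\Lambda = g$; concretely $\tilde g$ carries no forbidden triplet $0,1,0$ or $1,0,1$ centered at any even site. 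Fix one such $\tilde g$.

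Next I would set up the target interval and reduce the claim. Put $a := \min\Lambda$ and $b := \max\Lambda$, and choose even integers $2k \le a-2$ and $2l \ge b+2$, so that $\Ikl \supset \Lambda$ with a margin of at least two sites at each end. By Proposition~\ref{prop:classical-Ikl} and Definition~\ref{defn:SUSYboundaryconfig} it suffices to exhibit a configuration $\hat g$ on $\Ikl$ such that (i) $\hat g|_\Lambda = g$, (ii) $\hat g$ has no forbidden triplet inside $\Ikl$, and (iii) $\hat g(2k)=\hat g(2k+1)$ and $\hat g(2l-1)=\hat g(2l)$: the open-edge supersymmetric state attached to such a $\hat g \in \widehat{\Upsilon}_{k,l}$ restricts to $\psi_g$ on $\Al(\Lambda)$, since restriction merely forgets the sites of $\Ikl\setminus\Lambda$.

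Then I would construct $\hat g$ in three blocks. On the central block $[a,b]$ set $\hat g := \tilde g$; this reproduces $g$ on $\Lambda$ and, decisively, fills every interior gap of $\Lambda$ consistently because $\tilde g$ is globally forbidden-triplet-free. On the left margin $[2k,a-1]$ and the right margin $[b+1,2l]$ I overwrite $\tilde g$ by the constant runs equal to $\tilde g(a)$ and to $\tilde g(b)$ respectively; these margins are disjoint from $\Lambda$, so (i) is untouched, and each margin has length at least two, so (iii) holds. For (ii), a constant run contains no forbidden triplet, and because each margin value agrees with the adjacent central value ($\tilde g(a)$, resp.\ $\tilde g(b)$) every triplet straddling a junction has the form $(v,v,\cdot)$ or $(\cdot,v,v)$, which can never be $0,1,0$ or $1,0,1$ (these require the two outer entries equal and the middle one different, which is impossible once two consecutive entries coincide). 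Hence $\hat g \in \widehat{\Upsilon}_{k,l}$ and the construction is complete.

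The step I expect to be the real obstacle is the interior gap-filling implicit in the central block. A function on $\Lambda$ that is only \emph{locally} consistent, carrying no forbidden triplet entirely contained in $\Lambda$, need not admit any ground-state extension at all: at a site absent from $\Lambda$ but flanked by occupied neighbours the two even-centered admissibility conditions inherited from its left and its right can be jointly unsatisfiable, so both values $0$ and $1$ are blocked. This is precisely why the hypothesis must be read as genuine supersymmetry of the state, i.e.\ global extendability through Theorem~\ref{thm:AllclassicZ}; once the extension $\tilde g\in\Upsilon$ is secured the gaps cost nothing, and the remaining passage to even edges enforcing the boundary condition is the routine constant-padding argument above.
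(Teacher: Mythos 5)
Your proof is correct, and it in fact establishes a slightly stronger statement than the one the paper proves. The paper's (very terse) proof implicitly reads ``supersymmetric classical state on $\Lambda$'' through the correspondence of Theorem~\ref{thm:AllclassicZ} with the set $\Upsilon_{\Lambda}$ of Definition~\ref{defn:ground-config}, i.e.\ as a configuration vanishing off $\Lambda$ whose extension by zeros over all of $\Z$ is a ground-state configuration. Under that reading the conclusion is essentially immediate: choose $\Ikl\supset\Lambda$ with at least two extra sites on each side; the zero-padded configuration has no forbidden triplets and satisfies \eqref{eq:edge-const} trivially, hence lies in $\widehat{\Upsilon}_{k,l}$, and Proposition~\ref{prop:classical-Ikl} converts it into the desired open-edge supersymmetric state --- this is all the paper's two-line proof is asserting. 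You instead read the hypothesis as ``restriction of some globally supersymmetric classical state,'' which is a strictly larger class: for instance $g(0)=g(2)=1$ on $\Lambda=\{0,2\}$ is the restriction of the fully occupied configuration, yet its zero-extension contains the forbidden triplet $0,1,0$ centered at $0$, so it is not covered by the paper's reading. For this larger class zero-padding fails, and your constant-padding by the values $\tilde g(a)$ and $\tilde g(b)$ --- justified by the observation that a forbidden triplet never contains two equal consecutive entries --- is exactly the extra ingredient needed; your verification of (i)--(iii), including the degenerate cases where a single triplet meets both margins, goes through. Your closing caveat is also correct and pinpoints why some global-extendability reading of the hypothesis is unavoidable: absence of forbidden triplets wholly inside $\Lambda$ does not guarantee any ground-state extension (e.g.\ the values $0,1,0,1$ on $\Lambda=\{-1,0,2,3\}$ block both possible values at the missing site $1$), so the proposition would be false under the naive ``locally consistent'' reading. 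In short, you follow the same configuration-space route as the paper, but you carry out the extension argument explicitly and under a more general hypothesis, whereas the paper's choice of $\Upsilon_\Lambda$ makes the extension step trivial and leaves it implicit.
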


\begin{proof}
First  recall the  one-to-one correspondence  between  
 the set of  classical supersymmetric ground states on $\Lambda$ 
 and ${\Upsilon}_{\Lambda}$  by Proposition  
\ref{prop:AllclassicZ}.
Recall  the  one-to-one correspondence  between  
 the set of  classical free-boundary supersymmetric ground states 
on $\Ikl$ 
 and $\widehat{\Upsilon}_{k,l}$  by Proposition \ref{prop:classical-Ikl}.
Thus any  classical supersymmetric ground state on  $\Lambda$ 
can be extended  to  
at least one  classical  free-boundary supersymmetric ground state on 
 $\Ikl$ that includes $\Lambda$.
\end{proof}

\section*{Acknowledgments}
H. K. was supported in part by JSPS Grant-in-Aid for  Scientic Research on Innovative Areas No. JP18H04478 and JP20H04630, and JSPS KAKENHI Grant No. 18K03445. 
H. M.   would like to thank   Prof. Arai and 
 Dr. Huijse for helpful discussion. H. M. acknowledges 
  Riyu-1  group of Kanazawa University for encouragement.  


\appendix
\section{Appendix}

\subsection{Forms of $\hat{\Xi}$}
\label{subsec:example-hatXi}
We will give concrete examples  for 
 local  $\{-1, +1\}$-sequences of conservation 
   of   Definition \ref{defn:seq-conservation} and 
  their associated   local fermion operators 
   of   Definition \ref{defn:local-assignment}.
First we  see that  $\hat{\Xi}_{0,1}$  on $\Izeroone\equiv[0,1,2]$
  consists of  two  $\pm$-characters only.

\begin{center}
\begin{tabular}{|c||ccc|}
\hline
${\hat{\Xi}_{0,1}}$ &$0$ &$1$ &$2$  \\ \hline
$\IDmizeroone$ &$-1$ &$-1$ &$-1$   \\
\hline
$\IDplzeroone$ &$+1$ &$+1$ &$+1$  \\
 \hline\end{tabular}
\end{center}
By  \eqref{eq:Qseq}  of Definition \ref{defn:local-assignment}
 the corresponding local fermion operators are  
\begin{align}
\label{eq:Xi0-1}
\Qseq(\IDmizeroone)&=c_{0} c_{1} c_{2}
\in {\Al}(\Izeroone)_{-},\nonumber\\
\Qseq(\IDplzeroone)&=c^{\ast}_{0} c^{\ast}_{1} c^{\ast}_{2}
\in {\Al}(\Izeroone)_{-}.
\end{align}

We consider 
   a next  smallest segment  $\Izerotwo\equiv[0,1,2,3,4]$ 
  by setting  $k=0$ and $l=2$. 
The space 
$\hat{\Xi}_{0,2}$ on $\Izerotwo$
 consists of the following five  $\{-1, +1\}$-sequences:
\begin{center}
\begin{tabular}{|c||cc|c|cc|}
\hline
${\hat{\Xi}_{0,2}}$ &$0$ &$1$ &$2$ &$3$ &$4$ \\ \hline
$\IDmizerotwo$ &$-1$ &$-1$ &$-1$ &$-1$ &$-1$  \\
\hline
$u_{[0,4]}^{\rm{i}}$  &$-1$ &$-1$ &$-1$ &$+1$ &$+1$  \\
$u_{[0,4]}^{\rm{ii}}$  &$-1$ &$-1$ &$+1$ &$+1$ &$+1$  \\
\hline
$v_{[0,4]}^{\rm{i}}$  &$+1$ &$+1$ &$+1$ &$-1$ &$-1$  \\
$v_{[0,4]}^{\rm{ii}}$  &$+1$ &$+1$ &$-1$ &$-1$ &$-1$  \\
\hline
$\IDplzerotwo$ &$+1$ &$+1$ &$+1$ &$+1$ &$+1$  \\
 \hline\end{tabular}
\end{center}
Note that 
\begin{align}
\label{eq:Xi0-2-minus}
\IDmizerotwo=-\IDplzerotwo,\ u_{[0,4]}^{\rm{i}}=-v_{[0,4]}^{\rm{i}},\ 
u_{[0,4]}^{\rm{ii}}=-v_{[0,4]}^{\rm{ii}}.
\end{align}
By \eqref{eq:Qseq}  of Definition \ref{defn:local-assignment}
we have 
\begin{align}
\label{eq:Xi0-2}
\Qseq(\IDmizerotwo)&=c_{0} c_{1} c_{2}c_{3} c_{4}
\in {\Al}(\Izerotwo)_{-},\nonumber\\
\Qseq(u_{[0,4]}^{\rm{i}})&=c_{0} c_{1} c_{2}c^{\ast}_{3} c^{\ast}_{4}\in {\Al}
(\Izerotwo)_{-},\nonumber\\
\Qseq(u_{[0,4]}^{\rm{ii}})&=c_{0} c_{1} c^{\ast}_{2}c^{\ast}_{3} c^{\ast}_{4}
\in {\Al}(\Izerotwo)_{-},\nonumber\\
\Qseq(v_{[0,4]}^{\rm{i}})&=c^{\ast}_{0} c^{\ast}_{1} c^{\ast}_{2}c_{3} c_{4}
\in {\Al}(\Izerotwo)_{-},\nonumber\\
\Qseq(v_{[0,4]}^{\rm{ii}})&=c^{\ast}_{0} c^{\ast}_{1} c_{2}c_{3}c_{4}
\in {\Al}(\Izerotwo)_{-},\nonumber\\
\Qseq(\IDplzerotwo)&=c^{\ast}_{0} c^{\ast}_{1} c^{\ast}_{2}c^{\ast}_{3} c^{\ast}_{4}
\in {\Al}(\Izerotwo)_{-}.
\end{align}

We then consider    the segment  $\Izerothree\equiv[0,1,2,3,4,5,6]$ 
  taking   $k=0$ and $l=3$. 
By definition it consists  of  $5+4+4+5=18$ $\{-1, +1\}$-sequences:
\begin{center}
\begin{tabular}{|c||cc|ccc|cc|}
\hline
${\hat{\Xi}_{0,3}}$ &$0$ &$1$ &$2$ &$3$ &$4$ &$5$ &$6$ \\ \hline
$s_{[0,6]}^{\circ} $ &$-1$ &$-1$ &$-1$ &$-1$ &$-1$ &$-1$ &$-1$  \\
$s_{[0,6]}^{\rm{i}}$ &$-1$ &$-1$ &$-1$ &$+1$ &$+1$ &$-1$ &$-1$  \\
$s_{[0,6]}^{\rm{ii}}$ &$-1$ &$-1$ &$+1$ &$+1$ &$-1$ &$-1$ &$-1$  \\
$s_{[0,6]}^{\rm{iii}}$ &$-1$ &$-1$ &$-1$ &$+1$ &$-1$ &$-1$ &$-1$  \\
$s_{[0,6]}^{\rm{iv}}$ &$-1$ &$-1$ &$+1$ &$+1$ &$+1$ &$-1$ &$-1$  \\
\hline
$u_{[0,6]}^{\rm{i}}$  &$-1$ &$-1$ &$-1$ &$-1$ &$-1$ &$+1$ &$+1$  \\
$u_{[0,6]}^{\rm{ii}}$   &$-1$ &$-1$ &$-1$ &$-1$ &$+1$ &$+1$ &$+1$   \\
$u_{[0,6]}^{\rm{iii}}$   &$-1$ &$-1$ &$-1$ &$+1$ &$+1$ &$+1$ &$+1$   \\
$u_{[0,6]}^{\rm{iv}}$   &$-1$ &$-1$ &$+1$ &$+1$ &$+1$ &$+1$ &$+1$   \\
\hline
$v_{[0,6]}^{\rm{i}}$   &$+1$ &$+1$ &$+1$ &$+1$ &$+1$ &$-1$ &$-1$   \\
$v_{[0,6]}^{\rm{ii}}$   &$+1$ &$+1$ &$+1$ &$+1$ &$-1$ &$-1$ &$-1$   \\
$v_{[0,6]}^{\rm{iii}}$   &$+1$ &$+1$ &$+1$ &$-1$ &$-1$ &$-1$ &$-1$   \\
$v_{[0,6]}^{\rm{iv}}$   &$+1$ &$+1$ &$-1$ &$-1$ &$-1$ &$-1$ &$-1$  \\
\hline
$t_{[0,6]}^{\bullet}$ &$+1$ &$+1$ &$+1$ &$+1$ &$+1$ &$+1$ &$+1$  \\
$t_{[0,6]}^{\rm{i}}$ &$+1$ &$+1$ &$+1$ &$-1$ &$-1$ &$+1$ &$+1$  \\
$t_{[0,6]}^{\rm{ii}}$ &$+1$ &$+1$ &$-1$ &$-1$ &$+1$ &$+1$ &$+1$  \\
$t_{[0,6]}^{\rm{iii}}$ &$+1$ &$+1$ &$+1$ &$-1$ &$+1$ &$+1$ &$+1$  \\
$t_{[0,6]}^{\rm{iv}}$ &$+1$ &$+1$ &$-1$ &$-1$ &$-1$ &$+1$ &$+1$  \\
 \hline\end{tabular}
\end{center}
Note that 
$s_{[0,6]}^{\circ}\equiv \IDmizerothree$ and 
$t_{[0,6]}^{\bullet}\equiv \IDplzerothree$
and that 
\begin{align}
\label{eq:Xi0-3-minus}
s_{[0,6]}^{\circ}=-t_{[0,6]}^{\bullet},\ 
s_{[0,6]}^{k}=-t_{[0,6]}^{k},\forall k\in \{\rm{i}, \rm{ii}, \rm{iii}, \rm{iv}\} \nonumber\\
u_{[0,6]}^{k}=-v_{[0,6]}^{k}, 
\forall k\in \{\rm{i}, \rm{ii}, \rm{iii}, \rm{iv}\}.
\end{align}
Using  the rule 
we obtain  the following list of 18 fermion operators associated to
 ${\hat{\Xi}_{0,3}}$:
\begin{align}
\label{eq:Xi0-3}
\Qseq(s_{[0,6]}^{\circ})&=c_{0} c_{1} c_{2}c_{3} c_{4} c_{5} c_{6}
\in {\Al}(\Izerothree)_{-},\nonumber\\
\Qseq(s_{[0,6]}^{\rm{i}})&=c_{0} c_{1} c_{2}c^{\ast}_{3} c^{\ast}_{4} c_{5} c_{6}
\in {\Al}(\Izerothree)_{-},\nonumber\\
\Qseq(s_{[0,6]}^{\rm{ii}})&=c_{0} c_{1} c^{\ast}_{2}c^{\ast}_{3} c_{4} c_{5} c_{6}
\in {\Al}(\Izerothree)_{-},\nonumber\\
\Qseq(s_{[0,6]}^{\rm{iii}})&=c_{0} c_{1} c_{2}c^{\ast}_{3} c_{4} c_{5} c_{6}
\in {\Al}(\Izerothree)_{-},\nonumber\\
\Qseq(s_{[0,6]}^{\rm{iv}})&=c_{0} c_{1} c^{\ast}_{2}c^{\ast}_{3} c^{\ast}_{4} c_{5} c_{6}
\in {\Al}(\Izerothree)_{-},\nonumber\\
\Qseq(u_{[0,6]}^{\rm{i}})&=c_{0} c_{1} c_{2}c_{3} c_{4} c^{\ast}_{5} c^{\ast}_{6}
\in {\Al}(\Izerothree)_{-},\nonumber\\
\Qseq(u_{[0,6]}^{\rm{ii}})&=c_{0} c_{1} c_{2}c_{3} c^{\ast}_{4} c^{\ast}_{5} c^{\ast}_{6}
\in {\Al}(\Izerothree)_{-},\nonumber\\
\Qseq(u_{[0,6]}^{\rm{iii}})&=c_{0} c_{1} c_{2}c^{\ast}_{3} c^{\ast}_{4} c^{\ast}_{5} c^{\ast}_{6}
\in {\Al}(\Izerothree)_{-},\nonumber\\
\Qseq(u_{[0,6]}^{\rm{iv}})&=c_{0} c_{1} c^{\ast}_{2}c^{\ast}_{3} c^{\ast}_{4} c^{\ast}_{5} c^{\ast}_{6}\in {\Al}(\Izerothree)_{-},\nonumber\\
\Qseq(v_{[0,6]}^{\rm{i}})&=c^{\ast}_{0} c^{\ast}_{1} c^{\ast}_{2}c^{\ast}_{3} 
c^{\ast}_{4} c_{5} c_{6}
\in {\Al}(\Izerothree)_{-},\nonumber\\
\Qseq(v_{[0,6]}^{\rm{ii}})&=c^{\ast}_{0} c^{\ast}_{1} c^{\ast}_{2}c^{\ast}_{3} c_{4} c_{5} c_{6}
\in {\Al}(\Izerothree)_{-},\nonumber\\
\Qseq(v_{[0,6]}^{\rm{iii}})&=c^{\ast}_{0} c^{\ast}_{1} c^{\ast}_{2}c_{3} c_{4} c_{5} c_{6}
\in {\Al}(\Izerothree)_{-},\nonumber\\
\Qseq(v_{[0,6]}^{\rm{iv}})&=c^{\ast}_{0} c^{\ast}_{1} c_{2}c_{3} 
c_{4} c_{5} c_{6}
\in {\Al}(\Izerothree)_{-},\nonumber\\
\Qseq(t_{[0,6]}^{\bullet})&=c^{\ast}_{0} c^{\ast}_{1} c^{\ast}_{2}c^{\ast}_{3} c^{\ast}_{4}
c^{\ast}_{5} c^{\ast}_{6}
\in {\Al}(\Izerothree)_{-},\nonumber\\
\Qseq(t_{[0,6]}^{\rm{i}})&=c^{\ast}_{0} c^{\ast}_{1} c^{\ast}_{2}c_{3} c_{4}
c^{\ast}_{5} c^{\ast}_{6}
\in {\Al}(\Izerothree)_{-},\nonumber\\
\Qseq(t_{[0,6]}^{\rm{ii}})&=c^{\ast}_{0} c^{\ast}_{1} c_{2}c_{3} c^{\ast}_{4}
c^{\ast}_{5} c^{\ast}_{6}
\in {\Al}(\Izerothree)_{-},\nonumber\\
\Qseq(t_{[0,6]}^{\rm{iii}})&=c^{\ast}_{0} c^{\ast}_{1} c^{\ast}_{2}c_{3} c^{\ast}_{4}
c^{\ast}_{5} c^{\ast}_{6}
\in {\Al}(\Izerothree)_{-},\nonumber\\
\Qseq(t_{[0,6]}^{\rm{iv}})&=c^{\ast}_{0} c^{\ast}_{1} c_{2}c_{3} c_{4}
c^{\ast}_{5} c^{\ast}_{6}
\in {\Al}(\Izerothree)_{-}.
\end{align}

\end{document}